\documentclass[11pt]{llncs}

\usepackage{amsmath}
\usepackage[utf8]{inputenc}
\usepackage[T1]{fontenc}
\usepackage{amsfonts}
\usepackage{amssymb}
\usepackage{float}
\usepackage{url}
\usepackage{shuffle}
 \usepackage[all]{xy}
 \usepackage{textgreek}
 \usepackage{enumitem}
 \usepackage{dsfont}
 \usepackage{stmaryrd}
\usepackage{pxfonts}
\usepackage{bbm}
\usepackage{tikz}
\usepackage[fancy]{tikz-inet}
\usetikzlibrary{automata}
\usetikzlibrary{arrows}
\usetikzlibrary{shapes}
\usetikzlibrary{decorations.pathmorphing}
\usetikzlibrary{fit}
\usepackage{ifthen}
\usepackage{stmaryrd} 

\textheight22cm \textwidth15.6cm \hoffset-1.7cm \voffset-.5cm

\pagestyle{plain}

\tikzstyle{every picture}=[
  >=stealth', 
shorten >=1pt, 
node distance=1.44cm,
auto,
bend angle=45,
initial text=,
  every state/.style={inner sep=0.75mm, minimum size=1mm},
font=\small,
]
\def\cqfd{\hfill $\qed$}

\def\sc{\mathrm{sc}}

\bibliographystyle{plain}
\newcommand{\IntEnt}[1]{\llbracket #1\rrbracket}
\newtheorem{conj}{Conjecture}


\title{A combinatorial approach for the state complexity of the Shuffle product}
\author{
    Pascal Caron  \and Jean-Gabriel Luque \and Bruno Patrou
    \thanks{\{Pascal.Caron,  Jean-Gabriel.Luque, Bruno.Patrou\}@univ-rouen.fr}
}
\institute{LITIS, Université de Rouen,\\ Avenue de l'Université,\\ 76801 Saint-\'Etienne du Rouvray Cedex,\\ France}
\begin{document}
\maketitle
\begin{abstract}
We investigate the state complexity of the shuffle operation on regular languages initiated by C\^ampeanu \textit{et al.} and  studied subsequently by Brzozowski \textit{et al.}. We shift  the problem into the combinatorics domain by turning  the problem of state accessibility into a problem of intersection of partitions. This allows us to develop new tools and to reformulate the conjecture of Brzozowski \textit{et al.} about the above-mentionned state complexity. 
\end{abstract}

\section{Introduction}
Studies on state complexity have been going on for more than forty years now. The seminal work of Maslov \cite{Mas70} which gives values (without proofs) for the state complexity of some operations: square root, cyclic shift and proportional removal, paves the way. From these foundations, a very active field of research was open mainly initiated by Yu et al \cite{YZS94}. 
Lots and lots of papers were produced and different sub-domains  appeared depending on whether the used automata are deterministic or not, whether the languages are finite or infinite, belongs to some classes (codes, star-free, $\ldots$) and so on. We focus here on the (complete) deterministic case for any language.

The state complexity of a rational language is the size of its minimal (complete deterministic) automaton and the state complexity of a rational operation is the maximal one of those languages obtained by applying this operation onto languages of fixed state complexities. 

The classical  approach is to compute an upper bound  and to provide a witness, that is a specific example reaching the bound which is then the desired state complexity.


In some cases, the classical method has to be enhanced by 
%
%
an algebraic approach consisting in 
 %
%
 building a witness for a certain class of rational operations by searching in a set of automata with as many  transition functions as possible. This method has the advantage of being applied to a large class of operations
and has been described independently by Caron \textit{et al.} in \cite{CHLP18}  as the monster approach and by Davies in \cite{Dav18} as the OLPA approach but was implicitly present in older papers like \cite{BJLRS16}, \cite{DO09}. 

The shuffle product of two languages is the set of words obtained by riffle shuffling any word of the first language  together with any word of the second one.
The shuffle product is a regular operation. While it is easy to describe in terms of automata \cite{Eil74}, its state complexity is notoriously difficult to establish \cite{BJLRS16,CSY02}.
In \cite{BJLRS16}, the authors use implicitly the notion of monsters which we explicit in this paper. 
In particular, Brzozowski \textit{et al.} introduced a class of tableaux allowing us to describe, in a combinatoric way, the states of the minimal DFA recognizing the shuffle product of two regular languages. By investigating the monoid of transformations through the point of view of modifiers and monsters, we give a more precise combinatorial  description of the underlying mechanism. Our main result consists in describing the state complexity as the cardinal of a class of combinatorial objects.

Proving the conjecture of Brzozowski \textit{et al.} is equivalent to prove that our class of objects is in bijection with the tableaux they consider. Although we do not achieve this goal, we provide numerous new tools and results in that context: related enumeration results,  generating functions, and partial description of the bijection.

The paper is organized as follows. Section \ref{sect-prel}  gives definitions and notations about automata and combinatorics. In Section  \ref{shuf-mons-tableau}, we recall the definition of the shuffle product and we drag it  into the realm of  \emph{monsters} and \emph{modifiers}. 

Section \ref{sect-comb-path} is devoted to the description of the paths of the shuffle automaton in a combinatoric way. Related enumeration formul\ae\ are studied in Section \ref{sect-enumeration}. In Section \ref{sect-sc}, we give an exact expression for the state complexity of the shuffle product.


\section{Preliminaries}\label{sect-prel}
Let $\Sigma$ denote a finite alphabet. A word $w$ over  $\Sigma$ is a finite sequence of symbols of $\Sigma$. 
  The set of all finite words over $\Sigma$ is denoted by $\Sigma ^*$.  The empty word is denoted by  $\varepsilon$. A language is a subset of $\Sigma^*$. The cardinality of a finite set $E$ is denoted by $\#E$,  the set of subsets of  $E$ is denoted by $2^E$ and the set of mappings of $E$ into itself is denoted by $E^E$. 

A  finite automaton (FA) is a $5$-tuple $A=(\Sigma,Q,I,F,\delta)$ where $\Sigma$ is the input alphabet, $Q$ is a finite set of states, $I\subset Q$ is the set of initial states, $F\subset Q$ is the set of final states and $\delta$ is the transition function from  $Q\times \Sigma$ to $2^Q$ extended in a natural way from $2^Q\times \Sigma^*$ to $2^Q$.
 A word $w\in \Sigma ^*$ is recognized by an FA $A$ if $\delta(I,w)\cap F\neq \emptyset$.
The language recognized by an FA $A$ is the set $L(A)$ of words recognized by $A$.
Two automata are said to be equivalent if they recognize the same language.
A state $q$ is accessible in an FA  if there exists a word $w\in \Sigma ^*$ such that $q\in \delta(I,w)$.
An FA is complete and deterministic (CDFA) if $\#I=1$ and for all $q\in Q$, for all $a\in \Sigma$, $\#\delta(q,a)= 1$. 
Let $D=(\Sigma,Q_D,i_D,F_D,\delta)$ be a CDFA. 
For any word $w$, we denote by $\delta^w$ the function $q\rightarrow\delta(q,w)$.
Two states $q_1,q_2$ of  $D$ are equivalent if for any word $w$ of $\Sigma^*$, $\delta(q_1, w)\in F_D$ if and only if $\delta(q_2, w)\in F_D$. Such an equivalence is denoted by $q_1\sim q_2$. A CDFA is  minimal if there does not exist any equivalent   CDFA  with less states and it is well known that for any CDFA, there exists a unique minimal equivalent one \cite{HU79}. Such a minimal CDFA  can be  obtained from $D$ by computing the accessible part of the automaton $D/_\sim=(\Sigma,Q_D/_\sim,[i_D],F_D/_\sim,\delta_{\sim})$ where for any $q\in Q_D$, $[q]$ is the $\sim$-class of the state $q$ and satisfies the property  $\delta_{\sim}([q],a)=[\delta(q,a)]$, for any $a\in \Sigma$. The number of its states is defined by $\#_{Min}(D)$.
In a minimal CDFA, any two distinct states are pairwise inequivalent. 
For any integer $n$, let us denote $\llbracket n\rrbracket$ for $\{0,\ldots, n-1\}$. 
%
 The state complexity of a regular language $L$ denoted by $\sc(L)$ is the number of states of its minimal CDFA. 
  Let ${\cal L}_n$ be the set of languages of state complexity $n$. The state complexity of a binary operation $\otimes$ is the  function $\sc_{\otimes}$ associating  $\max\{\sc(L_{1}\otimes L_{2})\mid L_1\in\mathcal{L}_{n_1},L_2\in \mathcal{L}_{n_2}\}$ with any couple of integers $n_1,n_2$.  
A witness for the binary operation $\otimes$ is a couple $(L_1, L_2)\in({\cal L}_{n_1}\times {\cal L}_{n_2})$ such that $\sc(L_1\otimes L_2)=\sc_{\otimes}(n_1,n_2)$. 

We  also need some background from finite transformation semigroup theory \cite{GM08}.
Let $n$ be an integer. A transformation $t$ is an element of $\IntEnt{n}^{\IntEnt{n}}$.
We denote by $it$ the image of $i$ under $t$. A transformation of $\IntEnt{n}$ can be represented by $t=[i_0, i_1, \ldots i_{n-1}]$ which means that $i_k=kt$ for each $k\in \IntEnt{n}$ and $i_k\in \IntEnt{n}$. A \textit{permutation} is a bijective transformation on $\IntEnt{n}$. 
A \textit{cycle} of length $\ell\leq n$  is a permutation $c$, denoted   by $(i_0,i_1,\ldots, i_{\ell-1})$, on a subset $I=\{i_0,\ldots ,i_{\ell-1}\}$ of $\IntEnt{n}$  where  $i_kc=i_{k+1}$ for $0\leq k<\ell-1$ and $i_{\ell-1}c=i_0$. A permutation is always a composition of disjoint cycles.


\section{Shuffle, tableaux and monsters.}\label{shuf-mons-tableau}
\subsection{The shuffle product}
The shuffle operation \cite{EM1953} on regular languages is classically implemented as follows \cite{Eil74}.
Let $K$ and $L$ be regular languages over an alphabet $\Sigma$ recognized by DFAs ${\cal K} = (Q_K,\Sigma,\delta_K,q_K,F_K)$ and ${\cal L} = (Q_L,\Sigma,\delta_L,q_L,F_L)$, respectively. Then $K\shuffle L$ is recognized by the NFA $N = (Q_K\times Q_L,\Sigma,\delta,(q_K,q_L),F_K\times F_L)$,
where $\delta((p,q),a) = {(\delta_K(p,a),q),(p,\delta_L(q,a))}$.

The state complexity of this operation was  first  studied by Campeanu, Salomaa and Yu \cite{CSY02}.
Later on, Brzozowski \textit{et al.}  \cite{BJLRS16} completed this study. 

Let $D=(2^{Q_K\times Q_L},\Sigma,\delta',\{(q_K,q_L)\},F')$ be the subset automaton of $N$. If
$|Q_K| = m$ and $|Q_L| = n$, then NFA $N$ has $mn$ states. It follows that DFA $D$ has
at most $2^{mn}$ reachable and pairwise distinguishable states.

Since states of $D$ belongs to $2^{Q_K\times Q_L}$, they are associated in a natural way to boolean tableaux of size $m\times n$, each cell of them being either empty or marked. Brzozowski \textit{et al.} prove that only tableaux with at least one marked cell on the first line and one marked cell on the first column can be reached. Such tableaux are called valid and their number, $$\mathbf f(m,n)=2^{mn-1}+2^{(m-1)(n-1)}(2^{(m-1)}-1)(2^{(n-1)}-1),$$ is an upper bound for the state complexity of the shuffle  operation. The authors also produce a couple of ternary languages $K$, $L$ for which all pairs of valid states 
are distinguishable.

The main difficulty is to prove that all valid states can be reached for some couple of languages $K, L$. This question of reachability only depends on the transition functions of $\cal{K}$ and $\cal{L}$. First, observe that the finality of states does not matter. Next, to reach any valid state in the most easier way, it is relevant to consider automata  ${\cal K}$ and ${\cal L}$ having a maximum of transitions. This is the idea of monsters detailed in \cite{CHLP18,Dav18} and formalized in the next section. Brzozowski \textit{et al.} implicitly use this notion to prove the result for any $n$ when $m\leq 5$. They also obtained the desired answer by computation when $m=n=6$, but they are unable to extend the result for any values of $m,n$. Even if we are not able to solve the conjecture, we provide a new approach for the question of the reachability for valid states, which especially allows to compute the exact value of the state complexity for the shuffle operation.

\subsection{Modifiers, monsters and state complexity}
The work of Brzozowski \textit{et al.} is implicitly based on the fact that the shuffle operation is a describable operation. Let us recall here the definition of a describable operation as described in \cite{CHLP18}.

\begin{definition}
A $2$-modifier $\mathfrak m$ is a  $4$-tuple of mappings $(\mathfrak Q,\mathfrak d,\iota,\mathfrak f)$ acting on $2$ CDFA $A_1,A_2$ with $A_j=(\Sigma, Q_j,i_j,F_j,\delta_ j), j\in \{1,2\}$ to build a CDFA $\mathfrak m(A_1,A_2)=(\Sigma,Q,i,F,\delta)$, where 
\[Q=\mathfrak Q(Q_1, i_1, F_1,Q_2,i_2,F_2)\;,\;\;
i=\iota (Q_1, i_1, F_1,Q_2,i_2,F_2)\;,\;\;
F=\mathfrak f (Q_1, i_1, F_1,Q_2,i_2,F_2) \text{ and }\]
$$\forall a\in \Sigma,\ \delta^a=\mathfrak d (Q_1, i_1,\delta^a_1, F_1,Q_2,i_2,\delta_2^a,F_2).$$
\end{definition}

\begin{definition}
We consider an operation $\otimes$ acting on a couple of languages defined on the same alphabet. The operation $\otimes$ is said to be \emph{describable}  if there exists a $2$-modifier $\mathfrak{m}$ 
 such that  for any couple of CDFA $(A_1,A_2)$, we have $L({\mathfrak m}(A_1,A_2))=L(A_1)\otimes L(A_2)$.  
\end{definition}

We are now able  to define the $\mathfrak{Shuf}$ modifier for the shuffle operation on automata. Only relevant parameters will appear in the definition.
$$\begin{array}{ll}
\mathfrak{Shuf}&=(\mathfrak Q,\mathfrak d,\iota,\mathfrak f)\text{ where}\\
&\mathfrak Q(Q_1,Q_2)=2^{Q_1\times Q_2}\\
&\mathfrak i(i_1,i_2)=\{(i_1,i_2)\}\\
&\mathfrak f(Q_1,F_1,Q_2,F_2)=\{E\in \frak Q(Q_1,Q_2)\mid E\cap (F_1\times F_2)\neq \emptyset\}\\
&\begin{array}{lcll}\mathfrak d(Q_1,\delta_1,Q_2,\delta_2):&2^{Q_1\times Q_2}&\rightarrow& 2^{Q_1\times Q_2}\\
&E&\rightarrow&\{(\delta_1(q_1),q_2)\mid (q_1,q_2)\in E\}\cup \{(q_1,\delta_2(q_2))\mid (q_1,q_2)\in E\}\end{array}
\end{array}$$

The classical construction for an automaton recognizing the language $L_1\shuffle L_2$ \cite{Eil74} for any pair $(L_1,L_2)$ of regular languages described, respectively, by two automata $A_1$ and $A_2$, is equivalent to the following statement:
$$L(\mathfrak{Shuf}(A_1,A_2))=L_1\shuffle L_2.$$
In other words, the shuffle is a describable operation.

A $2$-monster    is a couple of  DFAs of size $n_1,n_2$ having $n_1^{n_1}n_2^{n_2}$ letters representing  couple of  functions from $\IntEnt{n_1}$ to $\IntEnt{n_1}$ and from $\IntEnt{n_2}$ to $\IntEnt{n_2}$. There are $2^{n_1+n_2}$ different $2$-monsters  depending on the set of their final states. 
\begin{definition}
A $2$-monster  is a couple of automata  
$(M^{n_1}_{F_1},M^{n_2}_{F_2})$ where  $M^{n_j}_{F_j}=(\Sigma,\IntEnt{n_j},0, F_j,\delta_j)$ for $j\in \{1,2\}$ is defined by
\begin{itemize}
\item the common alphabet  $\Sigma=\IntEnt{n_1}^{\IntEnt{n_1}}\times \IntEnt{n_2}^{\IntEnt{n_2}}$, 
\item the set of states  $\IntEnt{n_j}$,
\item the initial state  $0$,
\item the set of final states  $F_j$,
\item the transition function $\delta_j$  defined for any $(a_1,a_2)\in \Sigma$ by $\delta_j(q,(a_1,a_2))={a_j}(q)$. 

\end{itemize}
\end{definition}
Notice that a symbol of the alphabet is assimilated to a single transition function from any state.

\subsection{Using monsters to compute state complexity} 
The idea behind the notion of monster is to define kind of universal pairs of automata maximizing the state complexity for any describable binary operation. It implies a common alphabet for these automata.

If an operation is describable, it is sufficient to study  the behavior of its modifiers over monsters to compute its state complexity.
From Theorem 1 in \cite{CHLP18} we obtain 
$$\mathrm{sc}_{\shuffle}(n_1,n_2)=\mathrm{max}\{\displaystyle\#_{Min}(\mathfrak{Shuf}(M^{n_1}_{F_1},M^{n_2}_{F_2}))\mid  F_j\subset \IntEnt{n_j}\}.$$

It means that a witness belongs to the set of monsters.\\


Brzozowski \textit{et al.} show the  following results that are translated in terms of modifier as :
\begin{itemize}
\item any accessible state in $\mathfrak{Shuf}(M^{n_1}_{F_1},M^{n_2}_{F_2})$ is valid,
\item any couple of valid states in $\mathfrak{Shuf}(M^{n_1}_{F_1},M^{n_2}_{F_2})$ can be distinguished by using successively many times the letters :
\begin{itemize}
\item $a=((0,\ldots,n_1-1),0)$
\item $b=(0,(0,\ldots,n_2-1))$
\item $c=((\alpha\rightarrow \delta_{0,\alpha},n_2-1)$ where $ \delta_{x,y}=\left\{\begin{array}{ll}1&\text{ if }x=y\\0&\text{ otherwise}\end{array}\right.$ denotes the Kronecker delta.
\end{itemize}
\end{itemize}


\section{The combinatorics of paths in the shuffle automaton}\label{sect-comb-path}
\subsection{ A first example}
We  illustrate the notions investigated in this section  with the following example.
Let $f_{a}$, $f_{b}$, and $f_{c}$ be three maps from $\IntEnt 4$ to itself such that $f_{b}(0)=0$, $f_{a}(0)=f_{c}(0)=2$, and $f_{b}(2)=f_{c}(2)=f_{c}(3)=3$.
Let $g_{a},g_{b}$ and $g_{c}$ be three maps from $\IntEnt 3$ to itself such that $g_{b}(1)=0$, $g_{a}(0)=g_{c}(2)=1$, and $g_{b}(0)=g_{c}(0)=g_{c}(1)=2$.  We denote by $x$ the pair $(f_{x},g_{x})$ for $x=a,b,c$.
These pairs of functions represent transition functions in a $2$-monster. These transition functions are drawn in Figure \ref{twoautomata} (missing transitions are not relevant for the example and are not drawn). 
	
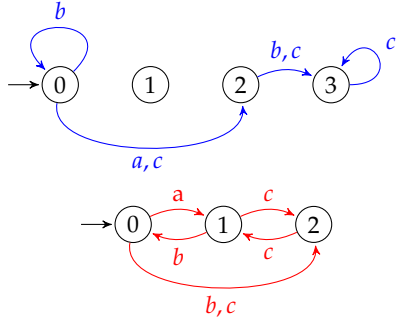
\begin{figure}[h]
	
\centerline{\begin{tikzpicture}[node distance=1.2cm, bend angle=25]
			\node[state,initial] (p0) {$0$};
			\node[state] (p1) [right of=p0] {$1$};
			\node[state] (p2) [right of=p1] {$2$};
			\node[state] (p3) [right of=p2] {$3$};
				\path[->]
        (p0) edge[bend left,out=-100, in=-85, looseness=.83,blue] node [swap]{$a,c$} (p2)
        (p0) edge [loop,blue] node [swap]{$b$} (p0)
        (p2) edge[bend left,blue] node {$b,c$} (p3)
        (p3) edge [out=0, in=60,loop,blue] node [swap]{$c$} (p3);
\end{tikzpicture}}

\centerline{\begin{tikzpicture}[node distance=1.2cm, bend angle=25]
			\node[state,initial] (p0) {$0$};
			\node[state] (p1) [right of=p0] {$1$};
			\node[state] (p2) [right of=p1] {$2$};
				\path[->]
        (p0) edge[bend left,red] node {a} (p1)
        (p0) edge[bend left,out=-100, in=-85, looseness=.83,red] node [swap]{$b,c$} (p2)
        (p1) edge[bend left,red] node {$b$} (p0)
        (p1) edge[bend left,red] node {$c$} (p2)
        (p2) edge[bend left,red] node {$c$} (p1);
\end{tikzpicture}}

\caption{Two automata figuring the transitions $a=(f_{a},g_{a}), b=(f_{b},g_{b})$, and $c=(f_{c},g_{c})$ in a $2$-monster\label{twoautomata}}
\end{figure}
We investigate the transitions in  $\mathfrak{Shuf}(M^{4}_{F_1},M^{3}_{F_2})$ drawn in figure \ref{TransShuf}.
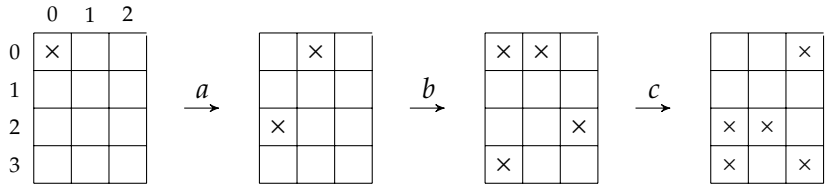
\begin{figure}[h]
	
 \centerline{             \begin{tikzpicture}[scale=0.5]
                    \draw[step=1.0,black, thin] (0,0) grid (3,4);
                    \node[scale=0.9] at (-0.5,3.5) {$0$};
                    \node[scale=0.9] at (-0.5,2.5) {$1$};
                    \node[scale=0.9] at (-0.5,1.5) {$2$};
                    \node[scale=0.9] at (-0.5,0.5) {$3$};
                    \node[scale=0.9] at (0.5,4.5) {$0$};
                    \node[scale=0.9] at (1.5,4.5) {$1$};
                    \node[scale=0.9] at (2.5,4.5) {$2$};
                    \node[scale=1] at (0.5,3.5) {$\times$};
                            \draw[->] (4,2) -- node[midway,above,scale=1.2] {$a$} (5,2) ;
                    \draw[step=1.0,black, thin] (6,0) grid (9,4);
                    \node[scale=1] at (7.5,3.5) {$\times$};
                    \node[scale=1] at (6.5,1.5) {$\times$};
                            \draw[->] (10,2) -- node[midway,above,scale=1.2] {$b$} (11,2) ;
                      \draw[step=1.0,black, thin] (12,0) grid (15,4);
                   \node[scale=1] at (12.5,3.5) {$\times$};
                   \node[scale=1] at (13.5,3.5) {$\times$};
                    \node[scale=1] at (14.5,1.5) {$\times$};
                   \node[scale=1] at (12.5,0.5) {$\times$};
                            \draw[->] (16,2) -- node[midway,above,scale=1.2] {$c$} (17,2) ;
                   \draw[step=1.0,black, thin] (18,0) grid (21,4);
                    \node[scale=0.85] at (20.5,3.5) {$\times$};
                   \node[scale=0.85] at (18.5,1.5) {$\times$};
                    \node[scale=0.85] at (18.5,0.5) {$\times$};
                   \node[scale=0.85] at (19.5,1.5) {$\times$};
                    \node[scale=0.85] at (20.5,0.5) {$\times$};
                  \end{tikzpicture}}
                  \caption{Some transitions in  $\mathfrak{Shuf}(M^{4}_{F_1},M^{3}_{F_2})$\label{TransShuf}}
\end{figure}
The tableaux represent subsets of $\IntEnt4\times\IntEnt3$; a symbol $\times$ is written in the cell $(i,j)$ if $(i,j)$ belongs to the subset. Following a transition symbol $(f,g)$, each tableau  is sent to another one constructed by merging the two tableaux obtained from the original one by  acting on the lines and  on the columns respectively by $f$ and $g$. An example is given in Figure \ref{Tableaux}.
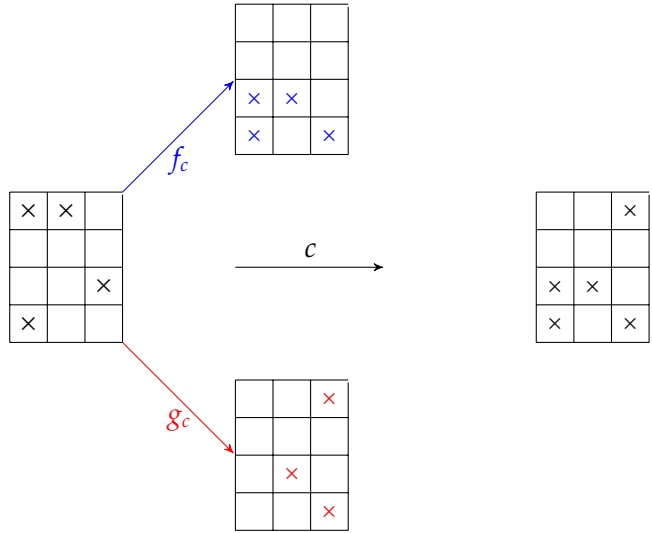
\begin{figure}[h]

 \centerline{             \begin{tikzpicture}[scale=0.5]
                      \draw[step=1.0,black, thin] (12,0) grid (15,4);
                   \node[scale=1] at (12.5,3.5) {$\times$};
                   \node[scale=1] at (13.5,3.5) {$\times$};
                    \node[scale=1] at (14.5,1.5) {$\times$};
                   \node[scale=1] at (12.5,0.5) {$\times$};
                            \draw[->,blue] (15,4) -- node[midway,below,scale=1.2] {$\color{blue} f_{c}$} (18,7) ;
                            \draw[->,red] (15,0) -- node[midway,below,scale=1.2] {$\color{red} g_{c}$} (18,-3) ;
                   \draw[step=1.0,black, thin] (18,5) grid (21,9);
                    \node[scale=0.85,red] at (20.5,-1.5) {$\times$};
                   \node[scale=0.85,red] at (19.5,-3.5) {$\times$};
                    \node[scale=0.85,red] at (20.5,-4.5) {$\times$}; -4.5
                            \draw[->] (18,2) -- node[midway,above,scale=1.2] {$c$} (22,2) ;
                   \draw[step=1.0,black, thin] (18,-5) grid (21,-1);
                   \node[scale=0.85,blue] at (18.5,6.5) {$\times$};
                    \node[scale=0.85,blue] at (18.5,5.5) {$\times$};
                   \node[scale=0.85,blue] at (19.5,6.5) {$\times$};
                    \node[scale=0.85,blue] at (20.5,5.5) {$\times$};
                   \draw[step=1.0,black, thin] (26,0) grid (29,4);
                    \node[scale=0.85] at (28.5,3.5) {$\times$};
                   \node[scale=0.85] at (26.5,1.5) {$\times$};
                    \node[scale=0.85] at (26.5,0.5) {$\times$};
                   \node[scale=0.85] at (27.5,1.5) {$\times$};
                    \node[scale=0.85] at (28.5,0.5) {$\times$};
                  \end{tikzpicture}}
                  \caption{Details of a transition\label{Tableaux}}
\end{figure}
In the aim to capture the path used to access to a state, we fill the cells of the tableaux by sets of integers as in figure \ref{TabPart}. More precisely, in a transition $T\displaystyle\mathop\rightarrow^{(f,g)} T'$ the tableau $T'$ is obtained from the tableau $T$ by moving the numbers according to the transformation (on lines) induced by $f$ and copying the numbers once shifted according to the transformation (on columns) induced by $g$.
\begin{figure}[h]
 \centerline{             \begin{tikzpicture}[scale=0.5]
                    \draw[step=1.0,black, thin] (0,0) grid (3,4);
                    \node[scale=0.9] at (-0.5,3.5) {$0$};
                    \node[scale=0.9] at (-0.5,2.5) {$1$};
                    \node[scale=0.9] at (-0.5,1.5) {$2$};
                    \node[scale=0.9] at (-0.5,0.5) {$3$};
                    \node[scale=0.9] at (0.5,4.5) {$0$};
                    \node[scale=0.9] at (1.5,4.5) {$1$};
                    \node[scale=0.9] at (2.5,4.5) {$2$};
                    \node[scale=1] at (0.5,3.5) {$1$};
                            \draw[->] (4,2) -- node[midway,above,scale=1.2] {$a$} (5,2) ;
                    \draw[step=1.0,black, thin] (6,0) grid (9,4);
                    \node[scale=1,red] at (7.5,3.5) {$2$};
                    \node[scale=1,blue] at (6.5,1.5) {$1$};
                            \draw[->] (10,2) -- node[midway,above,scale=1.2] {$b$} (11,2) ;
                      \draw[step=1.0,black, thin] (12,0) grid (15,4);
                   \node[scale=1,red] at (12.5,3.5) {$4$};
                   \node[scale=1,blue] at (13.5,3.5) {$2$};
                    \node[scale=1,red] at (14.5,1.5) {$3$};
                   \node[scale=1,blue] at (12.5,0.5) {$1$};
                            \draw[->] (16,2) -- node[midway,above,scale=1.2] {$c$} (17,2) ;
                   \draw[step=1.0,black, thin] (18,0) grid (21,4);
                    \node[scale=0.85,red] at (20.5,3.5) {$6\ 8$};
                   \node[scale=0.85,blue] at (18.5,1.5) {$4$};
                    \node[scale=0.85,blue] at (18.5,0.5) {$1$};
                   \node[scale=0.85] at (19.5,1.5) {$\color{blue}{2}\ \color{red}{7}$};
                    \node[scale=0.85] at (20.5,0.5) {$\color{blue}{3}\ \color{red}{5}$};
                  \end{tikzpicture}}
                  \caption{Tableaux of sets obtained by following a path.\label{TabPart}}
                  \end{figure}
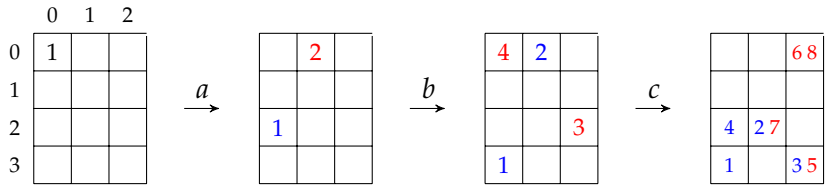
\bigskip                        
The non empty entries of a tableau obtained  this way partition the set $\{1,\dots,2^{k}\}$ for a given $k$. Such a configuration is completly described by a pair of vectors of disjoint parts $([\lambda_{1},\dots,\lambda_{m}],[\rho_{1},\dots,\rho_{n}])$ satisfying $\bigcup_{i}\lambda_{i}=\bigcup_{i}\rho_{i}=\{1,\dots,2^{k}\}$. The left vector is obtained by partitioning $\{1,\dots,2^{k}\}$ with respect to the lines of the tableau and the right vector is obtained by partitioning it with respect to the columns. Conversely, knowing the pair  $([\lambda_{1},\dots,\lambda_{m}],[\rho_{1},\dots,\rho_{n}])$, we easily reconstitute the tableau $T$ by noting that $T[i,j]=\lambda_{i}\cap\rho_{j}$; see Figure \ref{Tab2Vect} for an example.
\begin{figure}[h]
 \centerline{             \begin{tikzpicture}[scale=0.5]
                                     \draw[step=1.0,black, thin] (18,0) grid (21,4);
                    \node[scale=0.85,black] at (20.5,3.5) {$6\ 8$};
                   \node[scale=0.85,black] at (18.5,1.5) {$4$};
                    \node[scale=0.85,black] at (18.5,0.5) {$1$};
                   \node[scale=0.85] at (19.5,1.5) {$2\ 7$};
                    \node[scale=0.85] at (20.5,0.5) {$3\ 5$};
                    \node[scale=0.85,black] at (20.5,-0.5) {$\{3,5,6,8\}$};
                    \node[scale=0.85,black] at (19.5,4.5) {$\{2,7\}$};
                    \node[scale=0.85,black] at (18.5,-0.5) {$\{1,4\}$};
                    \node[scale=0.85,black] at (22,0.5){$\{1,3,5\}$};
                    \node[scale=0.85,black] at (22,1.5){$\{2,4,7\}$};
                    \node[scale=0.85,black] at (22,2.5){$\emptyset$};
                    \node[scale=0.85,black] at (22,3.5){$\{6,8\}$};
                  \node[scale=1.5,black] at (23.5,2) {$\sim$};
                  \node[scale=0.85,black] at (30,2) {$({\color{black}[\{6,8\},\emptyset,\{2,4,7\},\{1,3,5\}]},{\color{black}[\{1,4\},\{2,7\},\{3,5,6,8\}]})$};\end{tikzpicture}
                  }
                  \caption{A tableau and its associated pair of vectors of parts.\label{Tab2Vect}}
                  \end{figure}
Hence, the process allowing to obtain a tableau from a path is easily translated in terms of pairs of vectors. For instance, in Figure \ref{Path2Vect}, we follows the same path as in Figure \ref{TabPart} but we construct the associated pairs of vectors instead of the tableaux.
 \begin{figure}[h]
	 \[{}
	 \begin{array}{c}
		 ([\{1\},\emptyset,\emptyset,\emptyset],[\{1\},\emptyset,\emptyset])\\\downarrow a\\
		 ([\{{\color{red}2}\},\emptyset,\{{\color{blue}1}\},\emptyset],[\{{\color{blue}1}\},\{{\color{red}2}\},\emptyset])\\\downarrow b\\
		 ([\{{\color{blue}2},{\color{red}4}\},\emptyset,\{{\color{red}3}\},\{{\color{blue}1}\}],[\{{\color{blue}1},{\color{red}4}\},
		 \{{\color{blue}2}\},\{{\color{red}3}\}])\\\downarrow c\\
		 (\Lambda,P)=([\{{\color{red}{6,8}}\},\emptyset,\{{{\color{blue}2,4,\color{red}7}}\},\{{\color{blue}{1,3}},{\color{red}5}\}],[\{{\color{blue}{1,4}}\},
		 \{{\color{blue}2},{\color{red}7}\},\{{\color{blue}3},{\color{red}5,6,8}\}])
	 \end{array}
	 \]
	 \caption{A pair of vectors obtained from a path.\label{Path2Vect}}
 \end{figure}
It is easy to see that only some pairs can be constructed in such a way. For instance, the construction implies that $1$ always belongs to the first entry of the right vector and $2^{k}$ in the first entry of the left vector. But in the aim to completely describe the valid pairs in a combinatorics way, we need more rules. Indeed, observe the bottom pair $(\Lambda,P)$ in Figure \ref{Path2Vect}. The fact that $1$ and $4$ belongs to a same set in $P$ implies that $5$ and $8$ also belongs to a same set in $P$. Indeed, the set $\{5,8\}$ is the (shifted) image of $\{1,4\}$ which was obtained in the previous step.  In the same way, the fact that $6$ and $8$ belongs to a same set in $\Lambda$ implies that $2$ and $4$ also belongs to a same set in $\Lambda$. These properties are captured by the notion of $U$-pair formally described in the next section.
In what follows, we investigate the combinatorics of such objects and their relation with the state complexity of the shuffle product.
\subsection{A combinatorial representation of the paths}
 We define the \emph{left domain} (resp. \emph{right domain}) of  $E\subset \IntEnt{m}\times\IntEnt{n}$ as $\mathcal D_{L}(E)=\{i\mid (i,j)\in E\mbox{ for some }j\}$ (resp. $\mathcal D_{R}(E)=\{j\mid (i,j)\in E\mbox{ for some }i\}$). 
We associate with any transition $\mathfrak t=(E_{1},(f,g),E_{2})$ in the automaton $\mathfrak {Shuf}(M^{n_1}_{F_1},M^{n_2}_{F_2})$ 
its \emph{useful step} defined as the triplet $\mathfrak t^{u}=(E_{1},(f|_{\mathcal D_{L}(E_1)},g|_{\mathcal D_{R}(E_1)}),E_{2})$ 
where the notation $f|_{D}$ means the restriction of the map $f$ to the subdomain $D$. 
Consider a path $\mathfrak p=(\mathfrak t_{1},\dots,\mathfrak t_{k})$, its associated
 \emph{useful path} is defined by $\mathfrak p^{u}=(\mathfrak t_{1}^{u},\dots,\mathfrak t_{k}^{u})$.
 \begin{example}
	In Figure \ref{TransShuf}, we consider the transition labeled by $c$  given 
	by the triplet $\mathfrak t =(\{(0,0),(0,1),(2,2),(3,0)\},(f_{c},g_{c}),\{(0,2),(2,0),(2,1),(3,0),(3,2)\})$.\\ 
	Since $\mathcal D_{L}(\{(0,0),(0,1),(2,2),(3,0)\})=\{0,2,3\}$ and $\mathcal D_{R}(\{(0,0),(0,1),(2,2),(3,0)\})=\{0,1,2\}$, we have
	$\mathfrak t=(\{(0,0),(0,1),(2,2),(3,0)\},(f_{c}|_{\{0,2,3\}},g_{c}|_{\{0,1,2\}}),\{(0,2),(2,0),(2,1),(3,0),(3,2)\}).$ The useful path associated to the path drawn in Figure \ref{TransShuf} is
	\[{}\begin{array}{l}
	\left((\{(0,0)\},(f_{a}|_{\{0\}},g_{a}|_{\{0\}}),\{(0,1),(2,0)\}),\right. \\
	(\{(0,1),(2,0)\},(f_{a}|_{\{0,2\}},g_{a}|_{\{0,1\}}),\{(0,1),(0,2),(2,2),(3,0)\}),{}\\
	\left.(\{(0,0),(0,1),(2,2),(3,0)\},(f_{c}|_{\{0,2,3\}},g_{c}|_{\{0,1,2\}}),\{(0,2),(2,0),(2,1),(3,0),(3,2)\})\right).\end{array}
	\]
\end{example}
 
 For any accessible state $E$ in  $\mathfrak {Shuf}(M^{n_1}_{F_1},M^{n_2}_{F_2})$  and any path $\mathfrak p=(\mathfrak t_{1},\dots,\mathfrak t_{k})$ from $\{(0,0)\}$ to $E$, we define recursively a pair of vectors of sets $\mathcal P(\mathfrak p)=([\lambda_{0},\dots,\lambda_{m-1}],[\rho_{0},\dots,\rho_{n-1}])$ 
 \begin{itemize}
	 \item If $k=0$ (this means that $\mathfrak p=()$ and $E=\{(0,0)\}$) then $\mathcal P(\mathfrak p)=([\{1\},\emptyset,\dots,\emptyset],[\{1\},\emptyset,\dots,\emptyset])$,
	 \item If $k>0$ then denotes 
	  $\mathcal P(\mathfrak t_{1},\dots,\mathfrak t_{k-1})=(\Lambda,P)$ and $\mathfrak t_{k}=(E_{k-1},(f,g),E_{k})$. We set
	  \[{}
	   \mathcal P(\mathfrak p)=[(\Lambda\cdot {f}) \cup \Lambda^{\uparrow},P\cup (P\cdot{g})^{\uparrow}]
	   \]
	   with the notation

	   $$[\pi_{0},\dots,\pi_{\ell-1}]\cdot{h}=[\bigcup_{q\cdot h=0}\pi_{q},\dots,\bigcup_{q\cdot h=\ell-1}\pi_{q}],$$ 
	   	   $$[\pi_{0},\dots,\pi_{\ell-1}]\cup [\pi'_{0},\dots,\pi'_{\ell-1}]=[\pi_{0}\cup\pi'_{0},\dots,{}
	   \pi_{\ell-1}\cup\pi'_{\ell-1}],$$ and
	   	   $$[\pi_{0},\dots,\pi_{\ell-1}]^{\uparrow}=[\{q+r\mid q\in\pi_{0}\},\dots,
	   \{q+r\mid q\in\pi_{\ell-1}\}],$$
where $r=\max\bigcup_{q}\pi_{q}$.
 \end{itemize}
 Such an object is called a \emph{U-pair} and for a given $(m,n)$ the set of the U-pairs is denoted by $\mathcal U_{m,n}$.
 \begin{example}
	 Consider again the path  $\mathfrak p=(\mathfrak t_{a},\mathfrak t_{b},\mathfrak t_{c})$ drawn in Figure \ref{TransShuf}. We compute successively 
		 $\begin{array}{lll}\mathcal P(())&=&([\{1\},\emptyset,\emptyset,\emptyset],[\{1\},\emptyset,\emptyset]),\\
		 \mathcal P((\mathfrak t_{a}))&=&\left( [\emptyset,\emptyset,\{1\},\emptyset]\cup[\{1\},\emptyset,\emptyset,\emptyset]^{\uparrow},[\{1\},\emptyset,\emptyset]
		 \cup [\emptyset,\{1\},\emptyset]^\uparrow \right)=
		 ([\{2\},\emptyset,\{1\},\emptyset],[\{1\},\{2\},\emptyset]),\\
		 \mathcal P((\mathfrak t_{a},\mathfrak t_{b}))&=
		& \left([\{2\},\emptyset,\emptyset,\{1\}]\cup [\{2\},\emptyset,\{1\},\emptyset]^\uparrow,[\{1\},\{2\},\emptyset]\cup[\{2\},\emptyset,\{1\}]^\uparrow\right)\\
		 &=&([\{2,4\},\emptyset,\{3\},\{1\}],[\{1,4\},\{2\},\{3\}]),\\
		\mathcal P(\mathfrak p)&=&\left([\emptyset,\emptyset,\{2,4\},\{1,3\}]\cup[\{2,4\},\emptyset,\{3\},\{1\}]^\uparrow,
		 [\{1,4\},\{2\},\{3\}]\cup[\emptyset,\{3\},\{1,2,4\}]^\uparrow\right)\\
		 &=&([\{6,8\},\emptyset,\{2,4,7\},\{1,3,5\}],[\{1,4\},\{2,7\},\{3,5,6,8\}]).\end{array}$
		 
	 This is exactly the process described in Figure \ref{Path2Vect}.
 \end{example}
 The following proposition compiles some basic facts about U-pairs.
 \begin{proposition}\label{basicfacts}Let $\mathcal P((\mathfrak t_{1},\dots,\mathfrak t_{k}))=([\lambda_{0},\dots,\lambda_{m-1}],
 [\rho_{0},\dots,\rho_{n-1}])$. We have
	 \begin{enumerate}
		 \item $\bigcup_{q}\lambda_{q}=\bigcup_{q}\rho_{q}=\{1,\dots,2^{k}\}$,
		 \item $\lambda_{i}\cap\lambda_{j}=\emptyset$ for all  $i\neq j$,
		 \item $\rho_{i}\cap\rho_{j}=\emptyset$ for all  $i\neq j$,
		 \item $\mathcal P((\mathfrak t_{1},\dots,\mathfrak t_{k}))= \mathcal P((\mathfrak t_{1}^{u},\dots,\mathfrak t_{k}^{u}))$.
	 \end{enumerate}
 \end{proposition}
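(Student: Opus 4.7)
The plan is to prove items (1)--(3) simultaneously by induction on the length $k$ of the path, and then to derive (4) from an auxiliary invariant relating the empty entries of a U-pair to the left/right domains of the corresponding state.

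For the base case $k=0$, the pair $([\{1\},\emptyset,\ldots,\emptyset],[\{1\},\emptyset,\ldots,\emptyset])$ trivially satisfies (1), (2) and (3). For the inductive step I set $\mathcal P((\mathfrak t_1,\ldots,\mathfrak t_{k-1}))=(\Lambda,P)$ and observe that, by (1) at step $k-1$, the shift used in both components is $r=2^{k-1}$. On the left, $\bigcup_i(\Lambda\cdot f)_i=\bigcup_q\Lambda_q=\{1,\ldots,2^{k-1}\}$, while $\Lambda^{\uparrow}$ shifts the same union into $\{2^{k-1}+1,\ldots,2^k\}$; the cell-wise union then yields $\{1,\ldots,2^k\}$, proving (1) for the left vector. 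Disjointness (2) follows because the $(\Lambda\cdot f)_i$'s form a partition of $\{1,\ldots,2^{k-1}\}$ (the fibres of $f$ partition $\IntEnt{m}$ and the $\Lambda_q$'s are pairwise disjoint by induction hypothesis), the parts of $\Lambda^{\uparrow}$ form a partition of $\{2^{k-1}+1,\ldots,2^k\}$, and the two ranges are disjoint. A symmetric argument, using that $P$ contributes integers in $\{1,\ldots,2^{k-1}\}$ whereas $(P\cdot g)^{\uparrow}$ contributes only fresh integers in $\{2^{k-1}+1,\ldots,2^k\}$, handles (1) and (3) on the right.

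For (4), I would carry through the main induction a parallel invariant
\[
\Lambda_q\ne\emptyset\iff q\in\mathcal D_L(E_k),\qquad P_q\ne\emptyset\iff q\in\mathcal D_R(E_k),
\]
where $(\Lambda,P)=\mathcal P((\mathfrak t_1,\ldots,\mathfrak t_k))$ and $E_k$ is the state reached after $k$ steps. The base case is immediate. For the inductive step, the shuffle transition rule gives $E_k=\{(f(i),j):(i,j)\in E_{k-1}\}\cup\{(i,g(j)):(i,j)\in E_{k-1}\}$, whence $\mathcal D_L(E_k)=f(\mathcal D_L(E_{k-1}))\cup\mathcal D_L(E_{k-1})$; this matches precisely the two contributions $\Lambda\cdot f$ and $\Lambda^{\uparrow}$ in the recursive definition of $\mathcal P$, and symmetrically for $P$. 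Once the invariant is in place, $(\Lambda\cdot f)_i=\bigcup_{q\cdot f=i}\Lambda_q$ only reads the values of $f$ on $\mathcal D_L(E_{k-1})$, so replacing $f$ by $f|_{\mathcal D_L(E_{k-1})}$ leaves the output unchanged; likewise for $g$. Applying this observation at each step of the path yields $\mathcal P(\mathfrak p)=\mathcal P(\mathfrak p^u)$.

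The main obstacle is the bookkeeping behind (4): one really does need the auxiliary invariant linking the combinatorial object $(\Lambda,P)$ to the set-theoretic state $E_k$, since without it there is no a priori reason for the entries indexed by rows or columns outside the left/right domain to be empty. Once the invariant is isolated, its proof is a direct unfolding of both recursions, but stating and maintaining it alongside the main induction appears unavoidable.
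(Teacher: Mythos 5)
Your proof is correct. The paper states this proposition without any proof (it is offered as a compilation of ``basic facts''), so there is no argument to compare against; your simultaneous induction on $k$ — using item (1) at step $k-1$ to pin the shift $r$ to $2^{k-1}$, so that $\Lambda\cdot f$ (resp.\ $P$) populates $\{1,\dots,2^{k-1}\}$ while $\Lambda^{\uparrow}$ (resp.\ $(P\cdot g)^{\uparrow}$) populates the fresh range $\{2^{k-1}+1,\dots,2^{k}\}$ — is the natural way to supply the missing details for (1)--(3), and your auxiliary invariant $\lambda_q\neq\emptyset\iff q\in\mathcal D_{L}(E_k)$, $\rho_q\neq\emptyset\iff q\in\mathcal D_{R}(E_k)$ is exactly what is needed for (4), since restriction of $f$ and $g$ to the domains can only matter on indices carrying nonempty entries. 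It is worth noting that your invariant is a consequence of the stronger identity $\mathfrak s(\mathcal P(\mathfrak p))=E$ that the paper proves later (in the proposition on accessible states) by essentially the same one-step computation, so your argument is fully consistent with, and in fact anticipates, the paper's subsequent machinery.
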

We notice that $\mathcal U_{m,n}$ is a graded set $\mathcal U_{m,n}=\bigcup_{k\geq 0}\mathcal U_{m,n}^{(k)}$ where
 \begin{equation}\mathcal U_{m,n}^{(k)}=\left\{\left([\lambda_{0},\dots,\lambda_{m-1}],[\rho_{0},\dots,\rho_{n-1}]\right)\in \mathcal U_{m,n}\mid\bigcup_{q} \lambda_{q}=\bigcup_{q}\rho_{q}=\{1,\dots,2^{k}\}\right\},\end{equation}
 that is the set of images by $\mathcal P$ of  paths of length $k$ and source $\{(0,0)\}$.
 \begin{example}\label{ExU22}\ \\
\medskip
	 $\begin{array}{ll}
	 \mathcal U_{2,2}^{(0)}=&\{([\{1\},\emptyset],[\{1\},\emptyset])\},\\
	 \mathcal U_{2,2}^{(1)}=&\{([\{1,2\},\emptyset],[\{1,2\},\emptyset]),{}
	 ([\{1,2\},\emptyset],[\{1\},\{2\}]), ([\{2\},\{1\}],[\{1,2\},\emptyset]), ([\{2\},\{1\}],[\{1\},\{2\}])\},\\
	 \mathcal U_{2,2}^{(2)}=&\{([\{1,2,3,4\},\emptyset],[\{1,2,3,4\},\emptyset]),{}
	 ([\{1,2,3,4\},\emptyset],[\{1,2\},\{3,4\}]),
	 ([\{3,4\},\{1,2\}],[\{1,2,3,4\},\emptyset]),\\ 
	 &([\{3,4\},\{1,2\}],[\{1,2\},\{3,4\}]),
	 ([\{1,2,3,4\},\emptyset],[\{1,3,4\},\{2\}]), 
	 ([\{1,2,3,4\},\emptyset],[\{1,3\},\{2,4\}]),\\
	 &([\{1,2,3,4\},\emptyset],[\{1,4\},\{2,3\}]), ([\{1,2,3,4\},\emptyset],[\{1\},\{2,3,4\}]),{}
	 ([\{3,4\},\{1,2\}],[\{1,3,4\},\{2\}]), \\
	 &([\{3,4\},\{1,2\}],[\{1,3\},\{2,4\}]),
	 ([\{3,4\},\{1,2\}],[\{1,4\},\{2,3\}]), ([\{3,4\},\{1,2\}],[\{1\},\{2,3,4\}]),
	\\
	 &([\{1,2,4\},\{3\}],[\{1,2,3,4\},\emptyset]),([\{1,2,4\},\{3\}],[\{1,2\},\{3,4\}]),
	 ([\{1,4\},\{2,3\}],[\{1,2,3,4\},\emptyset]),\\
	 &([\{1,4\},\{2,3\}],[\{1,2\},\{3,4\}]),
	 ([\{2,4\},\{1,3\}],[\{1,2,3,4\},\emptyset]),([\{2,4\},\{1,3\}],[\{1,2\},\{3,4\}]),\\
	 &([\{4\},\{1,2,3\}],[\{1,2,3,4\},\emptyset]),([\{4\},\{1,2,3\}],[\{1,2\},\{3,4\}]),
	 ([\{1,2,4\},\{3\}]),[\{1,3,4\},\{2\}],\\
	 &([\{1,2,4\},\{3\}],[\{1,3\},\{2,4\}]),
	 ([\{1,2,4\},\{3\}],[\{1,4\},\{2,3\}]),([\{1,2,4\},\{3\}],[\{1\},\{2,3,4\}]),\\
	 &([\{1,4\},\{2,3\}],[\{1,3,4\},\{2\}]),([\{1,4\},\{2,3\}],[\{1,3\},\{2,4\}]),
	 ([\{1,4\},\{2,3\}],[\{1,4\},\{2,3\}),\\
	 &([\{1,2,4\},\{2,3\}],[\{1\},\{2,3,4\}]),
	 ([\{2,4\},\{1,3\}],[\{1,3,4\},\{2\}]),([\{2,4\},\{1,3\}],[\{1,3\},\{2,4\}]),\\
	 &([\{1,4\},\{1,3\}],[\{1,4\},\{2,3\}),([\{2,4\},\{1,3\}],[\{1\},\{2,3,4\}]),
	 ([\{4\},\{1,2,3\}],[\{1,3,4\},\{2\}]),\\
	 &([\{4\},\{1,2,3\}],[\{1,3\},\{2,4\}]),
	 ([\{4\},\{1,2,3\}],[\{1,4\},\{2,3\}]),([\{4\},\{1,2,3\}],[\{1\},\{2,3,4\}])\}.
	 \end{array}
	 $
 \end{example}
 \begin{proposition}
	 Each graded component $\mathcal U_{m,n}^{(k)}$ is in  a one to one correspondence with the set  of  useful paths  of length $k$ and source $\{(0,0)\}$.
 \end{proposition}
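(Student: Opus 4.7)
The plan is to factor $\mathcal P$ through useful paths using Proposition~\ref{basicfacts}(4), yielding a map $\mathcal P^{u}$ from useful paths of length $k$ with source $\{(0,0)\}$ to $\mathcal U_{m,n}^{(k)}$. Surjectivity of $\mathcal P^{u}$ is tautological, since by construction $\mathcal U_{m,n}^{(k)}$ is the image of $\mathcal P$ restricted to paths of length $k$ and source $\{(0,0)\}$, and each such path has the same image under $\mathcal P$ as its associated useful path. The real work is to prove injectivity, which I would do by induction on $k$ by reconstructing the last useful step from the U-pair.

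The case $k=0$ is immediate. For the inductive step, set $r=2^{k-1}$ and split each entry of $(\Lambda,P)\in \mathcal U_{m,n}^{(k)}$ at the threshold $r$: write $\lambda_{i}=\lambda_{i}^{\leq}\sqcup\lambda_{i}^{>}$ with $\lambda_{i}^{\leq}\subseteq\{1,\dots,r\}$ and $\lambda_{i}^{>}\subseteq\{r+1,\dots,2r\}$, and similarly for $\rho_{j}$. Writing out the recurrence $\mathcal P(\mathfrak p)=(\Lambda'\cdot f\cup\Lambda'^{\uparrow},\,P'\cup (P'\cdot g)^{\uparrow})$ forces the high part of $\Lambda$ to coincide with $\Lambda'^{\uparrow}$ and the low part of $P$ to coincide with $P'$. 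Hence the previous U-pair $(\Lambda',P')$ is uniquely recovered (shift the high parts of $\Lambda$ down by $r$ to obtain $\Lambda'$, and take $\rho'_{j}=\rho_{j}^{\leq}$), and by the induction hypothesis both the useful subpath $(\mathfrak t_{1}^{u},\dots,\mathfrak t_{k-1}^{u})$ and the state $E_{k-1}$ are uniquely determined.

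It remains to reconstruct the last useful step $\mathfrak t_{k}^{u}=(E_{k-1},(f|_{\mathcal D_{L}(E_{k-1})},g|_{\mathcal D_{R}(E_{k-1})}),E_{k})$. From $\lambda_{i}^{\leq}=(\Lambda'\cdot f)_{i}=\bigcup_{q\cdot f=i}\lambda'_{q}$ together with the disjointness of the non-empty $\lambda'_{q}$ given by Proposition~\ref{basicfacts}(2), each non-empty $\lambda'_{q}$ is contained in exactly one $\lambda_{i}^{\leq}$, and that index $i$ must be $q\cdot f$; the symmetric argument recovers $g$ on the non-empty $\rho'_{q}$'s. The key auxiliary identification $\mathcal D_{L}(E_{k-1})=\{q\mid\lambda'_{q}\neq\emptyset\}$ (and dually for the right domain) follows from $E_{k-1}=\{(i,j)\mid\lambda'_{i}\cap\rho'_{j}\neq\emptyset\}$ combined with $\bigcup_{j}\rho'_{j}=\{1,\dots,r\}$ from Proposition~\ref{basicfacts}(1). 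Once this is pinned down, $f|_{\mathcal D_{L}(E_{k-1})}$ and $g|_{\mathcal D_{R}(E_{k-1})}$ are uniquely determined, $E_{k}$ is forced as the image of $E_{k-1}$, and the induction closes. The main obstacle I expect is carefully verifying the identification of the useful domains with the supports of the $\lambda'_{q}$ and $\rho'_{q}$; the rest is routine bookkeeping on the recurrence defining $\mathcal P$.
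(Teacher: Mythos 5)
Your proof is correct and takes essentially the same route as the paper: the paper likewise reduces the statement to injectivity via point (4) of Proposition \ref{basicfacts} and argues by induction on $k$, using the split of $\{1,\dots,2^{k}\}$ at $2^{k-1}$ (so that the high part of the left vector and the low part of the right vector pin down the prefix) and the pairwise disjointness of the parts to pin down the last letter on the non-empty entries. Your explicit reconstruction of the inverse, including the identification $\mathcal D_{L}(E_{k-1})=\{q\mid\lambda'_{q}\neq\emptyset\}$, is just a constructive rephrasing of the paper's contrapositive comparison of two paths, which relies on the same identification implicitly.
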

 \begin{proof}
	 From the last point of proposition \ref{basicfacts} it suffices to prove that if $\mathfrak p^{u}\neq \mathfrak p'^{u}$ then 
	 $ \mathcal P(\mathfrak p)\neq  \mathcal P(\mathfrak p')$. We show this statement by induction on $k$. 
	 For $k=0$ the result is obvious. 
	 Suppose $\mathfrak p=(\mathfrak t_{1},\dots,\mathfrak t_{k})$ and 
	 $\mathfrak p'=(\mathfrak t'_{1},\dots,\mathfrak t'_{k})$ are two paths of length $k>0$ such that
	  $\mathfrak p^{u}\neq \mathfrak p'^{u}$. 
	  Let us denote $(\Lambda, P)=\mathcal P((\mathfrak t_{1}^{u},\dots,\mathfrak t_{k-1}^{u}))$ and $(\Lambda', P')=\mathcal P(({\mathfrak t'}_{1}^{u},\dots,{\mathfrak t'}_{k-1}^{u}))$.
	  If $(\mathfrak t_{1}^{u},\dots,\mathfrak t_{k-1}^{u})\neq 
	  ({\mathfrak t'}_{1}^{u},\dots,{\mathfrak t'}_{k-1}^{u})$ then by induction 
	 $(\Lambda, P)\neq  (\Lambda', P')$. For any $f, f'\in\IntEnt m$ and $g, g'\in\IntEnt n$, we have
	 $((\Lambda\cdot{f})\cup \Lambda^\uparrow, P\cup (P\cdot{g})^\uparrow)\neq ((\Lambda'\cdot {f'})\cup \Lambda'^\uparrow, P'\cup  (P'\cdot{g'})^\uparrow)$
	 because $\Lambda\neq \Lambda'$ or $P\neq P'$ and thus $\mathcal P(\mathfrak p)\neq  \mathcal P(\mathfrak p')$.
	 If $(\mathfrak t_{1}^{u},\dots,\mathfrak t_{k-1}^{u})=({\mathfrak t'}_{1}^{u},\dots,{\mathfrak t'}_{k-1}^{u})$ then $\mathfrak t_{k}^{u}\neq{\mathfrak t'}_{k}^{u}$.	 
	  There exist $(f,g)\neq (f',g')$ such that $\mathcal P(\mathfrak p)=((\Lambda\cdot {f})\cup \Lambda^\uparrow, P\cup (P\cdot{g})^\uparrow)$ and  $\mathcal P(\mathfrak p')=((\Lambda\cdot {f'})\cup \Lambda^\uparrow, P\cup (P\cdot{g'})^\uparrow)$ and  there exists $i$ with $\lambda_{i}\neq\emptyset$ and $f(i)\neq f'(i)$ or there exists $j$ with $\rho_{j}\neq\emptyset$ and $g(j)\neq g'(j)$. Hence $\Lambda\cdot f\neq \Lambda\cdot f'$ or $P\cdot g\neq P\cdot g'$.\cqfd
 \end{proof}
 A $k$-\emph{EXPcomposition} of size $n$ is a vector of sets $[\pi_{1},\dots,\pi_{n}]$ satisfying $\bigcup_{q}\pi_{q}=\{1,\dots,2^{k}\}$ and $\pi_{q}\cap \pi_{q'}\neq\emptyset$ implies $q=q'$.
 A $k$-\emph{Rvalid} vector of size $n$ is a $k$-\emph{EXPcomposition} $[\rho_{1},\dots,\rho_{n}]$ such that $1\in\rho_{1}$ and  for any $k'<k$ and any $i,j\leq 2^{k'}$, if $i, j\in \rho_{\alpha}$ for some $\alpha$ then $i+2^{k'}, j+2^{k'}\in \rho_{\beta}$ for some $\beta$. The set of $k$-Rvalid vector of size $n$ is denoted by $\mathcal R_{n}^{(k)}$. We define also  a $k$-\emph{Lvalid} vector of size $n$ as a $k$-\emph{EXPcomposition} $[\lambda_{1},\dots,\lambda_{m}]$ such that $2^{k}\in\lambda_{1}$ and  for any $k'<k$ and any $i,j\leq k'$, if $2^{k}+1-i, 2^{k}+1-j\in \lambda_{\alpha}$ for some $\alpha$ then $2^{k}+1-(i+2^{k'}), 2^{k}+1-(j+2^{k'})\in \lambda_{\beta}$ for some $\beta$. The set of $k$-Lvalid vector of size $m$ is denoted by $\mathcal L^{(k)}_{m}$. Remark that we can obtain any element of  $\mathcal L^{(k)}_{m}$ from an element of $\mathcal R_{n}^{(k)}$ by replacing the only occurrence of $i$ by $2^{k}+1-i$ for any $1\leq i\leq 2^k$.
 \begin{lemma} For any $k>0$,
	 we have $\mathcal L^{(k)}_{m}=\{(\Lambda\cdot {f})\cup \Lambda^\uparrow\mid \Lambda\in \mathcal L^{(k-1)}_{m}, f\in\IntEnt m^{\IntEnt m}\}$ and 
	 $\mathcal R^{(k)}_{n}=\{P\cup (P\cdot {g})^\uparrow\mid P\in \mathcal R^{(k-1)}_{n}, g\in\IntEnt n^{\IntEnt n}\}$.
 \end{lemma}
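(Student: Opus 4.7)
My plan is to prove only the statement about $\mathcal R^{(k)}_{n}$; the statement about $\mathcal L^{(k)}_{m}$ follows by applying the involution $i\mapsto 2^{k}+1-i$ (extended componentwise to vectors), which is explicitly noted in the text to be a bijection between $\mathcal R^{(k)}_{n}$ and $\mathcal L^{(k)}_{n}$ and which intertwines the two recursive constructions.

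For the inclusion $\supseteq$, I would take $P=[\pi_{1},\dots,\pi_{n}]\in\mathcal R^{(k-1)}_{n}$ and $g\in\IntEnt{n}^{\IntEnt n}$, and set $[\rho_{1},\dots,\rho_{n}]:=P\cup (P\cdot g)^{\uparrow}$, so that explicitly $\rho_{j}=\pi_{j}\cup\{x+2^{k-1}\mid x\in\bigcup_{g(q)=j}\pi_{q}\}$. The EXPcomposition conditions (covering of $\{1,\dots,2^{k}\}$ and pairwise disjointness) are routine, and $1\in\rho_{1}$ follows from $1\in\pi_{1}$. For the shift condition at level $k'$, I would split into cases: if $k'<k-1$ then any $i,j\leq 2^{k'}$ satisfy $i+2^{k'},j+2^{k'}\leq 2^{k-1}$, so membership in $\rho_{\alpha}$ reduces to membership in $\pi_{\alpha}$ and I invoke the $(k-1)$-Rvalidity of $P$; if $k'=k-1$ then $i,j\leq 2^{k-1}$ lie in $\pi_{\alpha}$, and by construction $i+2^{k-1},j+2^{k-1}\in(P\cdot g)^{\uparrow}_{g(\alpha)}\subseteq\rho_{g(\alpha)}$, so $\beta=g(\alpha)$ works.

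For the inclusion $\subseteq$, starting from $[\rho_{1},\dots,\rho_{n}]\in\mathcal R^{(k)}_{n}$, I would define $\pi_{j}:=\rho_{j}\cap\{1,\dots,2^{k-1}\}$ and prove $P=[\pi_{1},\dots,\pi_{n}]\in\mathcal R^{(k-1)}_{n}$. Covering, disjointness, and $1\in\pi_{1}$ are immediate; the $(k-1)$-validity at level $k''<k-1$ follows because both $i,j$ and their shifts $i+2^{k''},j+2^{k''}$ stay in $\{1,\dots,2^{k-1}\}$, so the $k$-validity of $[\rho_{1},\dots,\rho_{n}]$ at level $k'=k''$ transfers directly. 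I would then build $g$ by a simultaneous use of the $k$-validity condition at level $k'=k-1$: for each $q$ with $\pi_{q}\neq\emptyset$, this condition (applied with $\alpha=q$ and any $i=j$ in $\pi_{q}$, and also to pairs $i,j\in\pi_{q}$) guarantees that the set $\{x+2^{k-1}\mid x\in\pi_{q}\}$ is contained in a single $\rho_{\beta}$; I set $g(q):=\beta$, choosing $g(q)$ arbitrarily when $\pi_{q}=\emptyset$.

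It remains to verify $P\cup(P\cdot g)^{\uparrow}=[\rho_{1},\dots,\rho_{n}]$. The low half $\{1,\dots,2^{k-1}\}$ of position $j$ equals $\pi_{j}$ by definition. For the high half I would argue both inclusions: by construction of $g$, every shifted element of $\pi_{q}$ lies in $\rho_{g(q)}$, giving $\subseteq$; conversely any $y\in\rho_{j}$ with $y>2^{k-1}$ is of the form $x+2^{k-1}$ with $x\in\pi_{q}$ for some unique $q$, and then $\rho_{g(q)}$ also contains $y$, which by disjointness forces $g(q)=j$. The main conceptual point —and the only place where anything non-trivial is used— is the well-definedness of $g(q)$, which is exactly what the shift axiom of $k$-Rvalidity at level $k-1$ provides; beyond that, the proof is bookkeeping with the two halves $\{1,\dots,2^{k-1}\}$ and $\{2^{k-1}+1,\dots,2^{k}\}$. \cqfd
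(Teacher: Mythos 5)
Your proof is correct and follows essentially the same route as the paper: reduce to the $\mathcal R$-case via the involution $i\mapsto 2^{k}+1-i$, split $\{1,\dots,2^{k}\}$ into the two halves, use the shift condition at level $k'=k-1$ to define $g$ (the paper's $\rho'_{i}\subset\rho''_{\alpha_{i}}$, $i\cdot g=\alpha_{i}$), and check the Rvalidity conditions directly for the converse inclusion. You merely spell out a few steps the paper leaves implicit (well-definedness of $g$ and the verification that the high halves match), which is fine.
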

 \begin{proof}
	 We remark that the elements of $\mathcal R^{(k)}_{\ell}$ are in a one to one correspondence with the elements of $\mathcal L^{(k)}_{\ell}$. Indeed, a bijection  sends any element of   $\mathcal R_{\ell}^{(k)}$  to an element $\mathcal L^{(k)}_{\ell}$ by replacing the only occurrence of $i$ by $2^{k}+1-i$ for any $1\leq i\leq 2^k$. So it suffices to prove the result for  $\mathcal R^{(k)}_{n}$.\\ 
	 Let $P\in \mathcal R^{(k)}_{n}$ and consider $P'=[\rho'_{1},\dots,\rho'_{n}]$ obtained from $P$ by erasing all the numbers strictly greater than $2^{k-1}$. We check that $P'\in\mathcal R^{(k-1)}_{n}$ and we set $P''=[\rho''_{1},\dots,\rho''_{n}]$ such that $P=P'\cup P''^\uparrow$. Since $P\in\mathcal R^{(k)}_{n}$, for any $1\leq i\leq m$ one has $\rho'_{i}\subset \rho''_{\alpha_{i}}$ for some integer $\alpha_{i}$. Setting $i\cdot g=\alpha_{i}$, one obtains $P'\cdot{g}=P''$ and so $\mathcal R^{(k)}_{n}\subset\{P\cup (P\cdot{g})^\uparrow\mid P\in \mathcal R^{(k-1)}_{n}, g\in\IntEnt n^{\IntEnt n}\}$.
	 
	 Conversely, consider $P'=P\cup (P\cdot{f})^\uparrow=[\rho'_{1},\dots,\rho'_{n}]$ with $P=[\rho_{1},\dots,\rho_{n}]\in\mathcal R_{n}^{(k-1)}$ and $g\in\IntEnt n^{\IntEnt n}$. Straightforwardly, one obtains $\bigcup_{q}\rho'_{q}=\{1,\dots,2^{k}\}$, $1\in\rho'_{1}$, and $\rho'_{q}\cap\rho'_{q'}\neq\emptyset$ implies $q=q'$. Furthermore, since $P\in\mathcal R_{n}^{(k-1)}$, one has for any $k'<k-1$ and any $i,j\leq 2^{k'}$, if $i, j\in \rho'_{\alpha}$ for some $\alpha$ then $i+2^{k'}, j+2^{k'}\in \rho'_{\beta}$ for some $\beta$. Let $i,j\leq 2^{k-1}$ such that $i,j\in\lambda_{\alpha}$ for some $\alpha$ one
	  has $i+2^{k-1},j+2^{k-1}\in \rho'_{\alpha\cdot g}$ and so we deduce that $P'\in\mathcal R_{n}^{(k)}$ and
	   $\{P\cup (P\cdot{g})^\uparrow\mid P\in \mathcal R^{(k-1)}_{n}, g\in\IntEnt n^{\IntEnt n}\}\subset \mathcal R^{(k)}_{n}$.\cqfd
 \end{proof}
 As a direct consequence, one obtains the following result.
 \begin{proposition}
	  Each graded component $\mathcal U_{m,n}^{(k)}$ splits into the cartesian product
	 \begin{equation}
		 \mathcal U_{m,n}^{(k)}=\mathcal L_{m}^{(k)}\times \mathcal R_{n}^{(k)}.
	 \end{equation}
 \end{proposition}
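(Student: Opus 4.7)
The plan is a straightforward induction on $k$, exploiting the fact that the recursion $(\Lambda,P)\mapsto ((\Lambda\cdot f)\cup \Lambda^\uparrow,\ P\cup (P\cdot g)^\uparrow)$ defining $\mathcal P$ is entirely decoupled: the left output depends only on $(\Lambda,f)$ and the right one only on $(P,g)$. Combined with the preceding lemma, this decoupling immediately produces the announced factorization.

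For the base case $k=0$, the unique element of $\mathcal U_{m,n}^{(0)}$ is $([\{1\},\emptyset,\dots,\emptyset],[\{1\},\emptyset,\dots,\emptyset])$, which is also the only element of $\mathcal L_m^{(0)}\times\mathcal R_n^{(0)}$.

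For the inductive step, I would rewrite $\mathcal U_{m,n}^{(k)}$, using the recursive definition of $\mathcal P$, as the collection of pairs $((\Lambda\cdot f)\cup\Lambda^\uparrow,\ P\cup (P\cdot g)^\uparrow)$ where $(\Lambda,P)$ ranges over $\mathcal U_{m,n}^{(k-1)}$ and $(f,g)$ over $\IntEnt m^{\IntEnt m}\times\IntEnt n^{\IntEnt n}$ (this is the full cartesian product, as the 2-monster alphabet is exactly $\IntEnt{n_1}^{\IntEnt{n_1}}\times\IntEnt{n_2}^{\IntEnt{n_2}}$). The induction hypothesis substitutes $\mathcal L_m^{(k-1)}\times\mathcal R_n^{(k-1)}$ for $\mathcal U_{m,n}^{(k-1)}$, and the decoupling then yields
\[
\mathcal U_{m,n}^{(k)}=\bigl\{(\Lambda\cdot f)\cup\Lambda^\uparrow\mid \Lambda\in\mathcal L_m^{(k-1)},\ f\in\IntEnt m^{\IntEnt m}\bigr\}\times\bigl\{P\cup (P\cdot g)^\uparrow\mid P\in\mathcal R_n^{(k-1)},\ g\in\IntEnt n^{\IntEnt n}\bigr\}.
\]
By the preceding lemma these two factors are exactly $\mathcal L_m^{(k)}$ and $\mathcal R_n^{(k)}$, which closes the induction.

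The only subtlety worth making explicit lies in the reverse inclusion: given arbitrary $\Lambda\in\mathcal L_m^{(k)}$ and $P\in\mathcal R_n^{(k)}$, one must exhibit a \emph{single} length-$k$ path realizing the pair $(\Lambda,P)$. This is where the 2-monster hypothesis enters essentially: because the alphabet is the full cartesian product, the sequence $f_1,\dots,f_k$ witnessing $\Lambda\in\mathcal L_m^{(k)}$ (obtained by unfolding the preceding lemma) can be paired coordinatewise with the sequence $g_1,\dots,g_k$ witnessing $P\in\mathcal R_n^{(k)}$, producing a genuine path labeled $(f_1,g_1),\dots,(f_k,g_k)$ that realizes $(\Lambda,P)$ simultaneously on both sides. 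This is not a real obstacle here, but it is the step that would fail for a restricted alphabet and is precisely what monsters are designed to make available.
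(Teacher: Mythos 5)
Your proof is correct and follows exactly the paper's route: the paper's own argument is the one-line "by induction on $k$ from the definition of $\mathcal U_{m,n}^{(k)}$ and the previous lemma," and your write-up simply spells out that induction, including the decoupling of the recursion and the observation that the monster alphabet being the full product $\IntEnt m^{\IntEnt m}\times\IntEnt n^{\IntEnt n}$ lets any pair of witnessing sequences be combined into a single path. Nothing further is needed.
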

 \begin{proof}
	 By induction on $k$ from the definition of  $\mathcal U_{m,n}^{(k)}$ and the previous lemma.\cqfd
 \end{proof}
 \begin{example}
	 We have
	  $$
	 \mathcal L^{(2)}_{2}=\{[\{1,2,3,4\},\emptyset],[\{3,4\},\{1,2\}],[\{1,2,4\},\{3\}],
	 [\{1,4\},\{2,3\}],[\{2,4\},\{1,3\}],[\{4\},\{1,2,3\}]\}
	 $$
	 and
	 $$\mathcal R^{(2)}_{2}=\{[\{1,2,3,4\},\emptyset],[\{1,2\},\{3,4\}],[\{1,3,4\},\{2\}],
	 [\{1,4\},\{2,3\}],[\{1,3\},\{2,4\}],[\{1\},\{2,3,4\}]\}$$
	 Compare $\mathcal L^{(2)}_{2}\times  \mathcal R^{(2)}_{2}$ to $\mathcal U^{(2)}_{2,2}$ as described in Example \ref{ExU22}.
 \end{example}
 \begin{example}
	 In Figure \ref{Path2Vect}, the pair obtained by the last transition satisfies
	 $[\{6,8\},\emptyset,\{2,4,7\},\{1,3,5\}]=([\{2,4\},\emptyset,\{3\},\{1\}]\cdot {f_{c}})\cup [\{2,4\},\emptyset,\{3\},\{1\}]^\uparrow$
	 and $[\{1,4\},\{2,7\},\{3,5,6,8\}]=[\{1,4\},\{2\},\{3\}]\cup ([\{1,4\},\{2\},\{3\}]\cdot{g_{c}})^\uparrow$, where
	 $f_{c}(0)=2$, $f_{c}(2)=f_{c}(3)=3$, $g_{c}(0)=g_{c}(1)=2$, and $g_{c}(2)=1$.
 \end{example}

 \section{Enumeration formul\ae\ for U-pairs}\label{sect-enumeration}
 \subsection{Counting the successors of U-pairs}
 Let $\mathcal L_{m,\ell}^{(k)}=\{[\lambda_{0},\dots,\lambda_{m-1}]\in \mathcal L^{(k)}_{m}|\#\{i\mid \lambda_{i}\neq\emptyset\}=\ell\}$ and
  $\mathcal R_{n,\ell}^{(k)}=\{[\rho_{0},\dots,\rho_{n-1}]\in \mathcal R^{(k)}_{n}|\#\{i\mid \rho_{i}\neq\emptyset\}=\ell\}$. Obviously 
  $\#\mathcal L_{m,\ell}^{(k)}=\#\mathcal R_{m,\ell}^{(k)}$. So we only investigate 
   $\#\mathcal R_{m,\ell}^{(k)}$. Let $P\in \mathcal R_{n,\ell}^{(k)}$.
    We define $\mathrm{succ}(P)=\{P\cup (P\cdot g)^{\uparrow}\mid   g\in  \IntEnt n^{ \IntEnt n}\}$ and $\mathrm{succ}_{\ell'}(P)=\{P\cup (P\cdot g)^{\uparrow}\in \mathcal R^{(k+1)}_{n,\ell'}\mid g\in  \IntEnt n^{ \IntEnt n}\}=\mathrm{succ}(P)\cap  \mathcal R^{(k+1)}_{n,\ell'}$. We notice that the cardinal of $\mathrm{succ}_{\ell'}(R)$ depends only on three parameters: $n$, $\ell$ and $\ell'$; we denote by $s_{n}^{(\ell,\ell')}$ this number. We remark that if $\ell'<\ell$ then  $s_{n}^{(\ell,\ell')}=0$.
    \begin{example}
	    Let $P=[\{1\},\{2\},\{3,4\},\emptyset,\emptyset]$. Then any $P'\in \mathrm{succ}_{4}(P)$ can be written as 
	    $P'=P\cup (P\cdot g)^{\uparrow}$ where $\#\left((\{0,1,2\}\cdot g)\cap \{3,4\}\right)=1$. In the aim to compute the number of elements of $\mathrm{succ}_{4}(P)$, one has only to consider the case where $(\{0,1,2\}\cdot g)\cap \{3,4\}=\{3\}$ and multiply this number by $2$; in the general case, one has to multiply by a binomial number. The set of functions $g$ satisfying $(\{0,1,2\}\cdot g)\cap \{3,4\}=\{3\}$  splits into several disjoint subsets according to the values of the number $\#\{i\in \{0,1,2\}\mid i\cdot g=3\}$.
	    \begin{enumerate}
		    \item For $\#\{i\in \{0,1,2\}\mid i\cdot g=3\}=3$, there exists only one function $g$ satisfying this condition and this function corresponds to the vector $[\{1\},\{2\},\{3,4\},\{5,6,7,8\},\emptyset]$.
		    \item For $\#\{i\in \{0,1,2\}\mid i\cdot g=3\}=2$, there are $3=\binom 32$ possibilities considering that  $\{i\in \{0,1,2\}\mid i\cdot g=3\}=\{0,1\}, \{0,2\}$ or $\{1,2\}$. For each of these possibilities one obtains $3$ vectors that correspond to the possible images of the only element which does not belong to $\{i\in \{0,1,2\}\mid i\cdot g=3\}$. So we obtain the following $9$ vectors: 
		    \[\begin{array}{ccc}\ [\{1,7,8\},\{2\},\{3,4\},\{5,6\},\emptyset],&[\{1\},\{2,7,8\},\{3,4\},\{5,6\},\emptyset],&[\{1\},\{2\},\{3,4,7,8\},\{5,6\},\emptyset],\\
		    \ [\{1,6\},\{2\},\{3,4\},\{5,7,8\},\emptyset],&[\{1\},\{2,6\},\{3,4\},\{5,7,8\},\emptyset],&[\{1\},\{2\},\{3,4,6\},\{5,7,8\},\emptyset],\\\ [\{1,5\},\{2\},\{3,4\},\{6,7,8\},\emptyset],&[\{1\},\{2,5\},\{3,4\},\{6,7,8\},\emptyset],&[\{1\},\{2\},\{3,4,5\},\{6,7,8\},\emptyset].\end{array}\]
		   \item  For $\#\{i\in \{0,1,2\}\mid i\cdot g=3\}=1$, there are $3=\binom 31$ possibilities considering that  $\{i\in \{0,1,2\}\mid i\cdot g=3\}=\{0\}, \{1\}$ or $\{2\}$. For each of these possibilities one obtains $9=3^{2}$ vectors that correspond to the possible images of the two elements which do not belong to $\{i\in \{0,1,2\}\mid i\cdot g=3\}$. So we obtain the following $27$ vectors: 
		   \[\begin{array}{ccc}\ [\{1,6,7,8\},\{2\},\{3,4\},\{5\},\emptyset],&[\{1,6\},\{2,7,8\},\{3,4\},\{5\},\emptyset],&[\{1,6\},\{2\},\{3,4,7,8\},\{5\},\emptyset],\\\ 
		   [\{1,7,8\},\{2,6\},\{3,4\},\{5\},\emptyset],& [\{1\},\{2,6,7,8\},\{3,4\},\{5\},\emptyset],& [\{1\},\{2,6\},\{3,4,7,8\},\{5\},\emptyset],\\\ 
		   [\{1,7,8\},\{2\},\{3,4,6\},\{5\},\emptyset],&[\{1\},\{2,7,8\},\{3,4,6\},\{5\},\emptyset],&[\{1\},\{2\},\{3,4,6,7,8\},\{5\},\emptyset],\\\ 
		   [\{1,5,7,8\},\{2\},\{3,4\},\{6\},\emptyset], &[\{1,5\},\{2,7,8\},\{3,4\},\{6\},\emptyset], &[\{1,5\},\{2\},\{3,4,7,8\},\{6\},\emptyset], \\\
		   [\{1,7,8\},\{2,5\},\{3,4\},\{6\},\emptyset],&[\{1\},\{2,5,7,8\},\{3,4\},\{6\},\emptyset], & [\{1\},\{2,5\},\{3,4,7,8\},\{6\},\emptyset],\\\ 
		   [\{1,7,8\},\{2\},\{3,4,5\},\{6\},\emptyset],& [\{1\},\{2,7,8\},\{3,4,5\},\{6\},\emptyset],& [\{1\},\{2\},\{3,4,5,7,8\},\{6\},\emptyset],\\\ 
		   [\{1,5,6\},\{2\},\{3,4\},\{7,8\},\emptyset], & [\{1,5\},\{2,6\},\{3,4\},\{7,8\},\emptyset], &[\{1,5\},\{2\},\{3,4,6\},\{7,8\},\emptyset],\\\ 
		   [\{1,6\},\{2,5\},\{3,4\},\{7,8\},\emptyset],& [\{1\},\{2,5,6\},\{3,4\},\{7,8\},\emptyset],& [\{1\},\{2,5\},\{3,4,6\},\{7,8\},\emptyset],\\\ 
		   [\{1,6\},\{2\},\{3,4,5\},\{7,8\},\emptyset],& [\{1\},\{2,6\},\{3,4,5\},\{7,8\},\emptyset],& [\{1\},\{2\},\{3,4,5,6\},\{7,8\},\emptyset].\end{array}\]
		\end{enumerate}  
		This gives  $s_{5}^{(3,4)}=1+9+27=37$.
	\end{example}
	This last example describes the strategy we use to obtain the following result.
    \begin{proposition}\label{values} We have
    \begin{equation}\label{Fs_lm}
	    s_{n}^{(\ell,\ell+\delta)}=\delta!\binom{n-\ell}{\delta}\sum_{\alpha=\delta}^{\ell}\binom{\ell}\alpha\ell^{\ell-\alpha}\left\{\alpha\atop \delta\right\},
	\end{equation}
	where $\left\{a\atop b\right\}$ denotes the Stirling number of second kind counting the number of partitions of $\{1,2,\dots,a\}$ into $b$ non empty sets.
    \end{proposition}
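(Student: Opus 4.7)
The plan is to reduce the enumeration of successors to the counting of functions on the support of $P$, and then to stratify those functions by the portion of their image that spills outside that support.

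First, fix $P=[\rho_{0},\dots,\rho_{n-1}]\in\mathcal R_{n,\ell}^{(k)}$ and let $S=\{i\mid\rho_{i}\neq\emptyset\}$, so $|S|=\ell$. Since $\rho_{q}=\emptyset$ for every $q\notin S$, the vector $P\cdot g$, and hence the successor $P\cup(P\cdot g)^{\uparrow}$, depends only on the restriction $g|_{S}$. Moreover, because the parts $\rho_{q}$ for $q\in S$ are non-empty and pairwise disjoint, if $g|_{S}\neq g'|_{S}$ then at some position $j$ the unions $\bigcup_{q\cdot g=j}\rho_{q}$ and $\bigcup_{q\cdot g'=j}\rho_{q}$ differ, so the successors differ as well. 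The enumeration of $\mathrm{succ}_{\ell+\delta}(P)$ thereby reduces to the enumeration of maps $g:S\to\IntEnt n$ of a prescribed type.

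Next, I would identify that type. Position $j$ is non-empty in $P'=P\cup(P\cdot g)^{\uparrow}$ exactly when $j\in S$ (contribution from $P$) or $j\in g(S)$ (contribution from the shift), so the support of $P'$ equals $S\cup g(S)$ and the condition $P'\in\mathcal R^{(k+1)}_{n,\ell+\delta}$ becomes $|g(S)\setminus S|=\delta$. Fixing $S_{2}:=g(S)\setminus S$ as a $\delta$-element subset of $\IntEnt n\setminus S$ contributes the factor $\binom{n-\ell}{\delta}$, and it then remains to count maps $g:S\to S\cup S_{2}$ whose image covers $S_{2}$. To do so I would stratify by $\alpha:=\#\{q\in S\mid g(q)\in S_{2}\}$, which must satisfy $\delta\leq\alpha\leq\ell$: choosing the $\alpha$ preimages of $S_{2}$ inside $S$ gives $\binom{\ell}{\alpha}$ choices; sending the remaining $\ell-\alpha$ elements freely into $S$ contributes $\ell^{\ell-\alpha}$; surjecting the chosen $\alpha$ elements onto $S_{2}$ contributes $\delta!\left\{\alpha\atop\delta\right\}$. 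Multiplying and summing over $\alpha$ reproduces~\eqref{Fs_lm}.

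The only step requiring real verification is the injectivity claim of the second paragraph (distinct restrictions $g|_{S}$ give distinct successors), on which the whole reduction hinges; the subsequent three-way stratification is the elementary bookkeeping already illustrated in the example preceding the proposition, and I expect no further obstacle.
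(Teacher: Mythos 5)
Your proposal is correct and follows essentially the same route as the paper: reduce to counting maps $g$ on the $\ell$-element support, choose the $\delta$ new support positions ($\binom{n-\ell}{\delta}$), stratify by the number $\alpha$ of elements sent outside the support ($\binom{\ell}{\alpha}$, with $\ell^{\ell-\alpha}$ for the rest and $\delta!\left\{\alpha\atop\delta\right\}$ for the surjection). The only difference is that you make explicit the injectivity of $g|_{S}\mapsto P\cup(P\cdot g)^{\uparrow}$ (which holds because the shifted elements exceed $2^{k}$ and so determine $P\cdot g$), a point the paper's proof uses implicitly.
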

    \begin{proof}
	     Let	     
	    \begin{gather*}
	     P=[\rho_{0},\dots,\rho_{n-1}]\in\mathcal R_{n,\ell}^{(k)}\\
	      I=\{i_{1},\dots,i_{\ell}\}=\{i\mid P{i}\neq \emptyset\}\text{ and}\\
	       J=\{0,\dots,n-1\}\setminus I.
	       \end{gather*}
	        The set 
	       $\mathrm{succ}_{\ell+\delta}(P)$ splits as the following disjoint union
	       \begin{equation}
	       \mathrm{succ}_{\ell+\delta}(P)=\bigcup_{\alpha=\delta}^{\ell}\mathrm{succ}_{\ell+\delta,\alpha}(P),
	       \end{equation}
	       where
	       \begin{equation}
	       \mathrm{succ}_{\ell+\delta,\alpha}(P)=\left\{P\cup (P\cdot g)^\uparrow\in \mathrm{succ}_{\ell+\delta}(P)\mid \#\{i\in I\mid i\cdot g\in J\}=\alpha\right\}.
	       \end{equation}
	       But we have
	       \begin{equation}
	        \#\mathrm{succ}_{\ell+\delta,\alpha}(P)=\#\{g\in \IntEnt n^{I}\mid \#\{i\in I\mid i\cdot g\in J\}=\alpha\}.
	       \end{equation}
	        It remains to compute the cardinal  of $\chi_{\alpha}^{n,\ell,\delta}=\{g\in \IntEnt n^{I}\mid \#\{i\in I\mid i\cdot g\in J\}=\alpha\}$. Let $g\in \chi_{\alpha}^{n,\ell,\delta}$, then $I$ splits as the partition $I=I'\cup I''$ such that $(I'\cdot g)\subset I$, $(I''\cdot g)\subset J$, $\#(I''\cdot g)=\delta$, and $\#I''=\alpha$. To construct such a map, one has first to choose the sets $I''$ and $(I''\cdot g)$; we have $\binom\ell\alpha\binom{n-\ell}{\delta}$ possibilities to do that. Hence, we have to construct the image $(I''\cdot g)$ and this is equivalent to give an ordered partition of $I''$ into $\delta$ sets. We have $\delta!\left\{\alpha\atop \delta\right\}$ possibilities to do that.  Finally, to complete the description of $g$, we have to construct the restriction $g|_{I'}$ and this gives $\ell^{\ell-\alpha}$ possibilities. In conclusion, we have shown that $s_{n}^{(\ell,\ell+\delta)}=\sum_{\alpha=\delta}^{\ell}\# \chi_{\alpha}^{n,\ell,\delta}$ and $\# \chi_{\alpha}^{n,\ell,\delta}=\delta!\binom{n-\ell}{\delta}\binom\ell\alpha\ell^{\ell-\alpha}\left\{\alpha\atop\delta\right\}$. This is equivalent to our statement.\cqfd
	\end{proof}
	\begin{example}
		Let us  illustrate where the Stirling numbers in formula (\ref{Fs_lm}) come from. In fact, one has to compute the number of surjective functions from a set having $\alpha$ elements onto a set having $\delta$ elements. For our example, we suppose that we want to enumerate the surjective functions from $\{0,1,2,3,4\}$ to $\{0,1,2\}$. Let $g$ be such a map. The image of the elements by $g$ partitions the initial set into $3$ disjoint parts  $\displaystyle\{0,1,2,3,4\}=\bigcup_{i=0}^{2}\{j\mid f(j)=i\}$. The number of such set partitions equals to the Stirling number $\left\{5\atop 3\right\}=25$. But a partition does not characterizes completely a surjective function and we need also to take into account the image associated to each part. So  we have to multiply the number  $\left\{5\atop 3\right\}$ by $\delta!=6$ to obtain the number of surjections (here $150$). This gives the factors $\delta!\left\{\alpha\atop\delta\right\}$ in each term of  (\ref{Fs_lm}).
	\end{example}
	\begin{example}
		The numbers $s_{n}^{(i,j)}$ have some special values. Let us list some of them.
		\begin{itemize}
			\item $s_{n}^{(i,i)}=i^{i}$ for $1\leq i\leq n$,
			\item $s_{n}^{(i,i+1)}=(n-i)((i+1)^{i}-(i)^{i})$ for $1\leq i\leq n-1$,
			\item $s_{n}^{(i,2i)}=i!\binom {n-i}i$ for $1\leq i\leq \frac n2$, in particular  $s_{2n}^{(n,2n)}=n!$.
		\end{itemize}
	\end{example}
	Recall that the Hadamard product of two matrices of the same dimension $M=(m_{ij})_{ij}$ and $N=(n_{ij})_{ij}$ is the matrix $M\cdot N=(m_{ij}n_{ij})_{ij}$. Another way to state Proposition  \ref{values} is to write that the coefficient $s_{n}^{(i,j)}$ is an entry of an infinite matrix \begin{equation}\mathbf S_{n}=\mathbf B\cdot\mathbf A_{n}\end{equation} that is the Hadamard product of the matrix 
	\begin{equation}\label{MatriceB}\mathbf B=\left(\sum_{k=j-i}^{i}\binom iki^{i-k}\left\{k\atop j-i\right\}\right)_{ij}\end{equation}
	 which does not depend on $n$ with  the matrix $\mathbf A_{n}=\left((j-i)!\binom {n-i} {j-i}\right)_{ij}$ depending on $n$.
	 \begin{example}
		 \[{}
		 \left[{}
		 \begin{array}{cccccccc}
			 1&1&0&0&0&0&0&\dots{}\\
			 0&4&5&1&0&0&0&\dots\\
			 0&0&27&37&12&1&0&\dots\\
			 0&0&0&256&369&151&22&\dots\\
			 0&0&0&0&3125&4651&2190&\dots\\
			 0&0&0&0&0&46656&70993&\dots\\
			 \vdots&\vdots&\vdots&\vdots&\vdots&\vdots&\vdots
		 \end{array}
		 \right]\cdot
		 \left[{}
		 \begin{array}{cccccccc}
			 1&4&12&24&24&0&0&\dots{}\\
			 0&1&3&6&6&0&0&\dots\\
			 0&0&1&2&2&0&0&\dots\\
			 0&0&0&1&1&0&0&\dots\\
			 0&0&0&0&1&0&0&\dots\\
			 0&0&0&0&0&0&0&\dots\\
			 \vdots&\vdots&\vdots&\vdots&\vdots&\vdots&\vdots
		 \end{array}
		 \right]={}
		 \left[{}
		 \begin{array}{cccccccc}
			 1&4&0&0&0&0&0&\dots{}\\
			 0&4&15&6&0&0&0&\dots\\
			 0&0&27&74&24&0&0&\dots\\
			 0&0&0&256&369&0&0&\dots\\
			 0&0&0&0&3125&0&0&\dots\\
			 0&0&0&0&0&0&0&\dots\\
			 \vdots&\vdots&\vdots&\vdots&\vdots&\vdots&\vdots
		 \end{array}
		 \right].
		 \]
	 \end{example}
	 These matrices have combinatorial interpretation. First the entry $(i,j)$ of $\mathbf A_{n}$ is nothing but the number $a_{n-i,j-i}={(n-i)!\over (j-i)!}$ of ways of obtaining an ordered subset of $j-i$ elements from a set of $n-i$ elements (in the case where $n-i<0$, $j-i<0$ or $n<j$ we assume $a_{n-i,j-i}=0$ by convention).\\
	 The entries of the matrix $\mathbf B$ are interpreted  in terms of $r$-Stirling numbers. The $r$-Stirling number  $\left\{n\atop k\right\}_{r}$ is the number of partitions of a set of $n$ elements into $k$ nonempty disjoints subsets such that the first $r$ elements are in distinct subsets \cite{Broder1984}. The $r$-Bell polynomials \cite{Mezo2010} are defined by 
	 \begin{equation}B_{n,r}(x)=\sum_{k=0}^{n}\left\{n+r\atop k+r\right\}_{r}x^{r}.
	 \end{equation}
	 Their exponential generating function is (see \cite{Mezo2010} Theorem 3.1)
	 \begin{equation}\label{genfuncBr}
		 \sum_{n}B_{n,r}{x^{n}\over n!}=e^{x(e^{z}-1)+rz},
	 \end{equation}
	 and they satisfy
	 (see \cite{Mezo2010} Corollary 3.2)
	 \begin{equation}\label{cor32}
		 B_{n,r}(x)=\sum_{k=0}^{n}r^{k}\binom nkB_{n-k}(x),
	 \end{equation} 
	 where $B_{n}(x)=\sum_{k=0}^{n}\left\{n\atop k\right\}x^{k}$ is the usual Bell polynomial. Comparing (\ref{MatriceB}) and (\ref{cor32}), we find
	 \begin{equation}
		 \mathbf B_{i,j}=\left\{2i\atop j\right\}_{i}.
	 \end{equation}
	 For instance, the numbers of the line $i=2$ are interpreted as follows: 
	 \begin{itemize}
		 \item There are $4$ partitions of $\{1,2,3,4\}$ into two parts such that the numbers $1$ and $2$ are in two distinct parts: $\{\{1,3,4\},\{2\}\}$, $\{\{1,4\},\{2,3\}\}$, $\{\{1,3\},\{2,4\}\}$, and  $\{\{1\},\{2,3,4\}\}$.
		 \item There are $5$ partitions of $\{1,2,3,4\}$ into three parts such that the numbers $1$, $2$, and $3$ are in three distinct parts: $\{\{1,4\},\{2\},\{3\}\}$, $\{\{1\},\{2,4\},\{3\}\}$,  $\{\{1\},\{2\},\{3,4\}\}$, $\{\{1,3\},\{2\},\{4\}\}$, and $\{\{1\},\{2,3\},\{4\}\}$.
		 \item There is only $1$ partition of $\{1,2,3,4\}$ into four parts such that the numbers $1$, $2$, $3$, and $4$ are in four distinct parts.
	\end{itemize}
	So we have the following result.
	\begin{proposition}
		\begin{equation}
			s^{(i,j)}_{n}=a_{n-i,j-i}\left\{2i\atop j\right\}_{i}.
		\end{equation}
	\end{proposition}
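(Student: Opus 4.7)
The plan is to read off both factors from the Hadamard decomposition $\mathbf{S}_n = \mathbf{B} \cdot \mathbf{A}_n$ already established just above the statement, and then to identify $\mathbf{B}_{i,j}$ with an $r$-Stirling number using the $r$-Bell polynomial identity cited from \cite{Mezo2010}.

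First I would observe that Proposition~\ref{values} already presents $s_n^{(i,j)}$ as the product
$$s_n^{(i,j)} = \underbrace{(j-i)!\binom{n-i}{j-i}}_{=\mathbf{A}_n[i,j]}\;\cdot\;\underbrace{\sum_{\alpha=j-i}^{i}\binom{i}{\alpha}i^{i-\alpha}\left\{\alpha\atop j-i\right\}}_{=\mathbf{B}_{i,j}}.$$
By the combinatorial definition of $a_{n-i,j-i}$ recalled in the preceding discussion, the first factor is exactly $a_{n-i,j-i}$, so the whole problem reduces to verifying the identification $\mathbf{B}_{i,j} = \left\{2i\atop j\right\}_i$.

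For that identification I would specialise equation~(\ref{cor32}) at $n = r = i$, namely $B_{i,i}(x) = \sum_{k=0}^{i} i^k \binom{i}{k} B_{i-k}(x)$, and expand each ordinary Bell polynomial as $B_{i-k}(x) = \sum_{m} \left\{i-k\atop m\right\} x^m$. Extracting the coefficient of $x^{j-i}$ yields $\sum_{k=0}^{2i-j} i^k \binom{i}{k} \left\{i-k\atop j-i\right\}$ (the upper bound follows from the vanishing of the Stirling number outside this range); after the reindexation $\alpha = i-k$ this is exactly the sum defining $\mathbf{B}_{i,j}$ in~(\ref{MatriceB}). On the other hand, directly from the definition $B_{n,r}(x) = \sum_{k=0}^{n}\left\{n+r\atop k+r\right\}_r x^k$ used to state (\ref{genfuncBr}), the coefficient of $x^{j-i}$ in $B_{i,i}(x)$ is $\left\{2i\atop j\right\}_i$. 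Combining the two expressions gives $\mathbf{B}_{i,j} = \left\{2i\atop j\right\}_i$ and hence the claim.

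I expect no substantive obstacle: the two nontrivial ingredients (Proposition~\ref{values} and Corollary~3.2 of \cite{Mezo2010}) are already in hand, and the work reduces to a careful change of summation variable. The only delicate point is bookkeeping the summation range under $\alpha = i-k$ and checking that the boundary cases agree, in particular that both sides vanish when $j<i$ or $j>2i$ or $n<j$, which matches the standing conventions on $a_{n-i,j-i}$ and on $\left\{2i\atop j\right\}_i$.
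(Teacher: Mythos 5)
Your proposal is correct and follows essentially the same route as the paper: it reads the factor $(j-i)!\binom{n-i}{j-i}$ off Proposition~\ref{values} as $a_{n-i,j-i}$ and identifies $\mathbf B_{i,j}=\left\{2i\atop j\right\}_{i}$ by comparing (\ref{MatriceB}) with the $r$-Bell identity (\ref{cor32}) specialised at $n=r=i$, which is exactly the comparison the paper invokes. Your version merely makes the coefficient extraction and the reindexation $\alpha=i-k$ explicit (and tacitly corrects the paper's typographical slips in the definitions of $B_{n,r}(x)$ and $a_{n-i,j-i}$), which is fine.
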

	\subsection{Generating functions}
	From (\ref{genfuncBr}), we deduce that the coefficient of $x^{j}y^{i}z^{i}$ in the Taylor expansion of $e^{x(e^{z}-1)}\over 1-xye^{z}$ equals $\frac1{i!}\mathbf B_{i,j}$. Equivalently, consider the function  $e^{x(e^{z}-1)}z\over z-xye^{z}$ as a Taylor series in $y$. Each monomial $y^{i}$ has a coefficients that is a Laurent series in $z$. Hence, the double generating function of $\frac1{i!}\mathbf B_{i,j}$ is obtained by computing the constant term in $z$ in this expansion. In terms of residue, we summarize this result as
	\begin{equation}
			\sum_{i,j}B_{i,j}x^{j}{y^{i}\over i!}=\mathrm {Res}_{z=0}\left({e^{x(e^{z}-1)}\over z-xye^{z}}\right){},
		\end{equation}
		where $\mathrm {Res}_{z=0}$ denotes the residue in $z=0$ acting on each terms of the Taylor expansion in $y$  of its argument.
	\begin{example}
		We have
		\[{}
		{e^{x(e^{z}-1)}\over z-xye^{z}}=\sum_{i=0}^{\infty}z^{-i-1}e^{x(e^{z}-1)}e^{iz}y^{i}.
		\]
		The coefficient of $y^{4}$ in the previous expression admits the following Laurent expansion
		\[{}
		\begin{array}{rcl}
		z^{-5}e^{x(e^{z}-1)}e^{4z}&=&x^{4}z^{-5}+x^{4}(x+4)z^{-4}+\frac12x^{4}(16+9x+x^{2})z^{-3}+\frac16x^{4}(64+61x+15x^{2}+x^{3)}z^{-2}\\&&+\frac1{24}x^{4}(256+369x+151x^{2}+22x^{3}+x^{4})z^{-1}\\&&+\frac1{120}(1024+2101x+1275x^{2}+305x^{3}+30x^{4}+x^{5})+\cdots{}
		\end{array}
		\]
		The residue is nothing but the coefficient of $z^{-1}$ in this expression, that is $\frac1{24}x^{4}(256+369x+151x^{2}+22x^{3}+x^{4})$ as expected.
	\end{example}
		But the residue Theorem implies
		\begin{equation}
			\mathrm {Res}_{z=0}\left({e^{x(e^{z}-1)}\over z-xye^{z}}\right)=\frac1{2\mathbf i\pi}\oint_{\gamma}{e^{x(e^{z}-1)}\over z-xye^{z}}dz
		\end{equation}
		where $\gamma$ is a counterclockwise path  around a Jordan curve enclosing $0$ and $\mathbf i^{2}=-1$. In terms of symbolic computations, all works as if the curve encloses all the poles of  ${e^{x(e^{z}-1)}\over z-xye^{z}}$ now considered as a function of $z$. The function $z\rightarrow z-xye^{z}$ admits only one (simple) zero  at the value $z=-\mathrm W(-xy)$, where
		  $\mathrm W$ denotes the Lambert $W$ function \cite{Lambert1758,Euler1783}. We recall that $W(z)$ is the inverse function of $z\rightarrow ze^{z}$ and its Taylor expansion  (see \emph{eg} \cite{CGHJK1996}) is
		  \begin{equation}
			 \mathrm W(z)=\sum_{i>0}(-i)^{i-1}{z^{i}\over i!}.
		  \end{equation}
		  Hence,
		  \begin{equation}
			  \sum_{i,j}B_{i,j}x^{j}{y^{i}\over i!}=\mathrm {Res}_{z=-\mathrm W(-xy)}\left({e^{x(e^{z}-1)}\over z-xye^{z}}\right).
		  \end{equation}
		  Since $z=-\mathrm W(-xy)$ is the only pole of ${e^{x(e^{z}-1)}\over z-xye^{z}}$ of order $1$, the series  $ \sum_{i,j}B_{i,j}x^{j}{y^{i}\over i!}$ is the constant term in the Taylor expansion of $(z+\mathrm W(-xy)){e^{x(e^{z}-1)}\over z-xye^{z}}$ at $z=-\mathrm W(-xy)$. In other words,
		  \begin{equation}\label{B2W}
			  \sum_{i,j}B_{i,j}x^{j}{y^{i}\over i!}=\lim_{z\rightarrow -\mathrm W(-xy)}(z+\mathrm W(-xy)){e^{x(e^{z}-1)}\over z-xye^{z}}.
		  \end{equation}
		  But noticing that $\mathrm W(a)=ae^{-\mathrm W(a)}$ and $\mathrm W'(a)={\mathrm W(a)\over a(1+\mathrm W(a))}$, we obtain
		  \begin{equation}
			  \lim_{\alpha\rightarrow\mathrm W(a)}{\alpha-ae^{-\alpha}\over \alpha-W(a)}=\lim_{\beta\rightarrow a}{\mathrm W(\beta)-ae^{-\mathrm W(\beta)}\over \mathrm W(\beta)-\mathrm W(a)}=\lim_{\beta\rightarrow a}{(\beta-a)\mathrm W(a)\over (\mathrm W(\beta)-\mathrm W(a))\beta}={\mathrm W(a)\over a\mathrm W'(a)}=1+\mathrm W(a).
		  \end{equation}
		  Using this equality in (\ref{B2W}) we find the following result.
	\begin{proposition}
		The generating series of the coefficients $B_{i,j}$ is
		\begin{equation}
			\sum_{i,j}B_{i,j}x^{j}{y^{i}\over i!}={e^{-({\mathrm W(-xy)\over y}+x)}\over 1+\mathrm W(-xy)}.
		\end{equation}
	\end{proposition}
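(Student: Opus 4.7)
The plan is to combine the two main ingredients assembled just before the statement: the residue/limit formula (\ref{B2W}) and the auxiliary identity $\lim_{\alpha\to \mathrm W(a)}\frac{\alpha - ae^{-\alpha}}{\alpha - \mathrm W(a)}=1+\mathrm W(a)$. Essentially, there is nothing left but to plug in and simplify, so the proof is a short computation using the defining relation $\mathrm W(a)e^{\mathrm W(a)}=a$ of the Lambert function.

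First, I would rewrite the auxiliary limit in the variable $z$ used in (\ref{B2W}). Setting $a=-xy$ and $\alpha=-z$ gives
$$
\alpha-ae^{-\alpha} = -(z-xye^{z}), \qquad \alpha-\mathrm W(a) = -(z+\mathrm W(-xy)),
$$
so the auxiliary identity translates into
$$
\lim_{z\to -\mathrm W(-xy)}\frac{z-xye^{z}}{z+\mathrm W(-xy)}=1+\mathrm W(-xy),
$$
and hence
$$
\lim_{z\to -\mathrm W(-xy)}\frac{z+\mathrm W(-xy)}{z-xye^{z}}=\frac{1}{1+\mathrm W(-xy)}.
$$

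Next, I would evaluate the analytic numerator $e^{x(e^{z}-1)}$ at $z=-\mathrm W(-xy)$. From $\mathrm W(-xy)\,e^{\mathrm W(-xy)}=-xy$ we get $e^{-\mathrm W(-xy)}=-\mathrm W(-xy)/(xy)$, so
$$
x(e^{z}-1)\Big|_{z=-\mathrm W(-xy)} = x\,e^{-\mathrm W(-xy)}-x = -\frac{\mathrm W(-xy)}{y}-x.
$$
Multiplying by the limit computed above and substituting into (\ref{B2W}) produces exactly the claimed formula.

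There is no real obstacle here since all the hard analytic/combinatorial work—establishing the residue formula, reducing to a simple pole at $z=-\mathrm W(-xy)$, and proving the auxiliary limit—has already been done above. The only minor care is the sign bookkeeping when substituting $\alpha=-z$ and $a=-xy$ in the auxiliary identity, and recognizing that the exponential factor simplifies cleanly because of the defining functional equation of $\mathrm W$. Once these two reductions are done, the two factors multiply to the stated closed form.
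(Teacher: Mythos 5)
Your computation is correct and follows exactly the paper's route: substituting the auxiliary limit $\lim_{\alpha\to\mathrm W(a)}\frac{\alpha-ae^{-\alpha}}{\alpha-\mathrm W(a)}=1+\mathrm W(a)$ (with $a=-xy$, $\alpha=-z$) into the formula (\ref{B2W}) and evaluating $e^{x(e^{z}-1)}$ at $z=-\mathrm W(-xy)$ via $\mathrm W(a)e^{\mathrm W(a)}=a$. You merely spell out the sign bookkeeping and the simplification of the exponential factor that the paper leaves implicit, so the argument matches the paper's own proof.
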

	\begin{example}
		By applying the previous proposition, we find:
		\begin{equation}\begin{array}{rcl}\displaystyle
			\sum_{i,j}B_{i,j}x^{j}{y^{i}\over i!}&=&1+x \left( 1+x \right) y+\frac12\,{x}^{2} \left( x+4 \right)  \left( 1+x
 \right) {y}^{2}+\frac16\,{x}^{3} \left( 27+37\,x+12\,{x}^{2}+{x}^{3}
 \right) {y}^{3}\\&&+\frac1{24}\,{x}^{4} \left( 256+369\,x+151\,{x}^{2}+22\,{x}^
{3}+{x}^{4} \right) {y}^{4}+\cdots.\end{array}
		\end{equation}
	\end{example}
\subsection{Closed expressions for the number of successors}	
	Define the matrix $S_{n}$ constituted with the $n$ first  lines and the $n$ first columns in $\mathbf S$, that is $S_{n}=(s_{n}^{(i,j)})_{1\leq i,j\leq n}$. 
	\begin{example}
		The first matrices $S_{n}$ follow:
		\begin{equation}
			S_{2}=\left[ \begin {array}{cc} 1&1\\ \noalign{\medskip}0&4\end {array}
 \right],\ S_{3}=  \left[ \begin {array}{ccc} 1&2&0\\ \noalign{\medskip}0&4&5
\\ \noalign{\medskip}0&0&27\end {array} \right], S_{4}=\left[ \begin {array}{cccc} 1&3&0&0\\ \noalign{\medskip}0&4&10&2
\\ \noalign{\medskip}0&0&27&37\\ \noalign{\medskip}0&0&0&256
\end {array} \right], S_{5}= \left[ \begin {array}{ccccc} 1&4&0&0&0\\ \noalign{\medskip}0&4&15&6&0
\\ \noalign{\medskip}0&0&27&74&24\\ \noalign{\medskip}0&0&0&256&369
\\ \noalign{\medskip}0&0&0&0&3125\end {array} \right] ,
		\end{equation}
		\begin{equation}
			S_{6}= \left[ \begin {array}{cccccc} 1&5&0&0&0&0\\ \noalign{\medskip}0&4&20&
12&0&0\\ \noalign{\medskip}0&0&27&111&72&6\\ \noalign{\medskip}0&0&0&
256&738&302\\ \noalign{\medskip}0&0&0&0&3125&4651\\ \noalign{\medskip}0
&0&0&0&0&46656\end {array} \right], S_{7}= \left[ \begin {array}{ccccccc} 1&6&0&0&0&0&0\\ \noalign{\medskip}0&4&
25&20&0&0&0\\ \noalign{\medskip}0&0&27&148&144&24&0
\\ \noalign{\medskip}0&0&0&256&1107&906&132\\ \noalign{\medskip}0&0&0&0
&3125&9302&4380\\ \noalign{\medskip}0&0&0&0&0&46656&70993
\\ \noalign{\medskip}0&0&0&0&0&0&823543\end {array} \right] ,\dots
		\end{equation}
	\end{example}
		
Let us denote by $s_{n,k}^{(i,j)}$ the entries of the $k$th power $(S_{n})^{k}$.
From its definition, the number  $s_{n,k}^{(i,j)}$ is the cardinal of the set $\mathrm{succ}^{k}(P)\cap \mathcal R^{(k+k')}_{n,j}$ for any $P\in \mathcal R^{(k')}_{n,i}$. Hence, the entries  $s_{n,k}^{(1,j)}$ of the first line of $(S_{n})^{k}$ are the cardinal of the sets of $\mathcal R^{(k)}_{n,j}$ obtained by applying $k$ times the map $\mathrm succ$ to the vector $[\{1\},\emptyset,\dots,\emptyset]$. Equivalently, we have the following statement.
	\begin{proposition}
		We have  \begin{equation}\label{Lkl}\#\mathcal R_{n,\ell}^{(k)}=s_{n,k}^{(1,\ell)},\end{equation}
		 \begin{equation}\#\mathcal R^{(k)}_{m}=\sum_{\ell}s_{n,k}^{(1,\ell)},\end{equation} and
		 \begin{equation} 
		\#\mathcal U_{m,n}^{(k)}=\sum_{\ell',\ell''}s_{m,k}^{(1,\ell')}s_{n,k}^{(1,\ell'')}.\end{equation}
	\end{proposition}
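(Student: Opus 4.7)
The plan is to establish the first equality by induction on $k$, then deduce the second by summation, and the third via the Cartesian decomposition $\mathcal U_{m,n}^{(k)}=\mathcal L_{m}^{(k)}\times \mathcal R_{n}^{(k)}$ together with the bijection between $\mathcal L_{m,\ell}^{(k)}$ and $\mathcal R_{m,\ell}^{(k)}$ already noted in the text.

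For the first equality, I would first observe a uniqueness-of-predecessor property. Given $P'=P\cup (P\cdot g)^\uparrow\in \mathcal R_n^{(k+1)}$ with $P\in\mathcal R_n^{(k)}$, the entries of $(P\cdot g)^\uparrow$ consist entirely of integers in $\{2^k+1,\dots,2^{k+1}\}$ whereas the entries of $P$ live in $\{1,\dots,2^k\}$. Hence $P$ is recovered from $P'$ by intersecting each component with $\{1,\dots,2^k\}$, so the map sending a pair $(P,P')$ with $P'\in\mathrm{succ}(P)$ to $P'$ is injective on its first argument. This shows that, for fixed $\ell$, the set $\mathcal R_{n,\ell}^{(k+1)}$ is the disjoint union, over $i$ and over $P\in\mathcal R_{n,i}^{(k)}$, of $\mathrm{succ}_{\ell}(P)$, giving the recurrence
\begin{equation}
\#\mathcal R_{n,\ell}^{(k+1)}=\sum_{i}\#\mathcal R_{n,i}^{(k)}\cdot s_n^{(i,\ell)}.
\end{equation}
The base case $k=0$ is immediate: $\mathcal R_n^{(0)}$ contains the single vector $[\{1\},\emptyset,\dots,\emptyset]$, which lies in $\mathcal R_{n,1}^{(0)}$, so the row vector $(\#\mathcal R_{n,i}^{(0)})_i$ equals $e_1^T$. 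Writing the recurrence in matrix form, $(\#\mathcal R_{n,i}^{(k)})_i=e_1^T(S_n)^k$, whose $\ell$-th entry is by definition $s_{n,k}^{(1,\ell)}$. This is (\ref{Lkl}).

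The second identity follows by simply summing (\ref{Lkl}) over $\ell$, since $\mathcal R_n^{(k)}$ is the disjoint union of the $\mathcal R_{n,\ell}^{(k)}$ over $\ell\in\{1,\dots,n\}$. For the third, the earlier proposition gives $\mathcal U_{m,n}^{(k)}=\mathcal L_m^{(k)}\times\mathcal R_n^{(k)}$, and the symmetry $\#\mathcal L_{m,\ell}^{(k)}=\#\mathcal R_{m,\ell}^{(k)}$ (obtained from the explicit bijection $i\mapsto 2^k+1-i$) lets us replace each factor by a sum of matrix entries, giving
\begin{equation}
\#\mathcal U_{m,n}^{(k)}=\Bigl(\sum_{\ell'}\#\mathcal L_{m,\ell'}^{(k)}\Bigr)\Bigl(\sum_{\ell''}\#\mathcal R_{n,\ell''}^{(k)}\Bigr)=\sum_{\ell',\ell''}s_{m,k}^{(1,\ell')}\,s_{n,k}^{(1,\ell'')}.
\end{equation}

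The only genuine verification is the uniqueness-of-predecessor observation; once that is in hand the rest is formal bookkeeping, so I expect no real obstacle beyond a careful check that the base case $k=0$ contributes to the index $i=1$ (so the relevant row of $(S_n)^k$ is indeed the first).
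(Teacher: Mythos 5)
Your proof is correct and follows essentially the same route as the paper: the paper simply reads off the first row of $(S_n)^k$ as the count of vectors obtained by applying $\mathrm{succ}$ $k$ times to $[\{1\},\emptyset,\dots,\emptyset]$, which is exactly your matrix-form induction, and then uses the splitting $\mathcal U_{m,n}^{(k)}=\mathcal L_m^{(k)}\times\mathcal R_n^{(k)}$ together with $\#\mathcal L_{m,\ell}^{(k)}=\#\mathcal R_{m,\ell}^{(k)}$ for the other two identities. Your explicit unique-predecessor observation (recovering $P$ from $P'$ by intersecting with $\{1,\dots,2^k\}$) is a point the paper leaves implicit, and it is indeed the justification needed for the disjointness behind the recurrence, mirroring the erasure argument in the paper's lemma on $\mathcal R_n^{(k)}$.
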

	
\begin{example}
	Examine the first line of $S_{3}^{2}= \left[ \begin {array}{ccc} 1&10&10\\ \noalign{\medskip}0&16&155
\\ \noalign{\medskip}0&0&729\end {array} \right]$. This means that $\mathcal R^{(2)}_{3}$ contains
\begin{enumerate}
	\item One vector having exactly one non empty entry: $[\{1,2,3,4\},\emptyset,\emptyset]$;{}
	\item Ten vectors having exactly two non empty entries: $[\{1,2\},\{3,4\},\emptyset]$, $[\{1,2\},\emptyset,\{3,4\}]$, 
	$[\{1,3,4\},\{2\},\emptyset]$,  $[\{1,3\},\{2,4\},\emptyset]$, $[\{1,4\},\{2,3\},\emptyset]$, $[\{1\},\{2,3,4\},\emptyset]$,{}
	$[\{1,3,4\},\emptyset,\{2\}]$,  $[\{1,3\},\emptyset,\{2,4\}]$, $[\{1,4\},\emptyset,\{2,3\}]$, $[\{1\},\emptyset,\{2,3,4\}]$;
	\item Ten vectors having exactly three non empty entries: $[\{1\},\{2\},\{3,4\}]$, $[\{1,3\},\{2\},\{4\}]$, 
	$[\{1\},\{2,4\},\{4\}]$, $[\{1,4\},\{2\},\{3,4\}]$, $[\{1\},\{2,4\},\{3\}]$,
	$[\{1\},\{3,4\},\{2\}]$, $[\{1,3\},\{4\},\{2\}]$, 
	$[\{1\},\{4\},\{2,4\}]$, $[\{1,4\},\{3,4\},\{2\}]$, $[\{1\},\{3\},\{2,4\}].$ 
\end{enumerate} 
\end{example}
\begin{remark}
	Noticing that the diagonal entries in the matrices $S_{n}$ are $1^{1}$, $2^{2},\ \dots,$ $m^{m}$. It results that the number of elements $\mathcal R_{n}^{(k)}$ is a linear combination
	\begin{equation}
		\#\mathcal R_{n}^{(k)}=\sum_{i=1}^{n}a_{i}^{(n)}(i^{i})^{k},
	\end{equation}
	where $a_{i}^{(n)}$ are rational numbers that are independent of $k$. While we do not know a closed form for the values of the  $a_{i}^{(n)}$ coefficients, we can easily compute them by solving a system of linear equations or, alternatively, use some known formulas as those described in \cite{WShur2011}.\\
	The first formulas are:
	\begin{eqnarray}
		\#\mathcal R^{(k)}_{2}&=&\frac23+\frac13\,{4}^{k},\\
		\#\mathcal R^{(k)}_{3}&=&{\frac {6}{13}}+{\frac {12}{23}}\,{4}^{k}+{\frac {5}{299}}\,{27}^{k}
,\\
\#\mathcal R^{(k)}_{4}&=&{\frac {372}{1105}}+{\frac {100}{161}}\,{4}^{k}+{\frac {2880}{68471}}
\,{27}^{k}+{\frac {23}{136255}}\,{256}^{k},\\
\#\mathcal R^{(k)}_{5}&=&{\frac {135040}{517803}}+{\frac {1006400}{1507443}}\,{4}^{k}+{\frac {
7530000}{106061579}}\,{27}^{k}+{\frac {46000}{78183119}}\,{256}^{k}+{
\frac {4150701}{10832451881581}}\,{3125}^{k},\\
\#\mathcal R^{(k)}_{6}&=&{\frac {344810430}{1610539931}}+{\frac {4008890625}{5860435903}}\,{4}^
{k}+{\frac {18418610000}{183168346933}}\,{27}^{k}+{\frac {5833053}{
4534620902}}\,{256}^{k}\nonumber\\&&+{\frac {806896274400}{471547462857102511}}\,{
3125}^{k}+{\frac {114196541}{474523188718486138}}\,{46656}^{k},\\
\#\mathcal R^{(k)}_{7}&=&{\frac {5818082250876}{31579697044181}}+{\frac {157292430099924}{
229823691577177}}\,{4}^{k}+{\frac {4868336034090900}{37710516098219107
}}\,{27}^{k}+{\frac {200723945058}{88887962822497}}\,{256}^{k}\nonumber\\&&+{\frac 
{888824849603838210}{193433013191149163934799}}\,{3125}^{k}+{\frac {
34547422762566}{26332206893852752878029}}\,{46656}^{k}\nonumber\\&&+{\frac {
32920001103738912355}{678426042037319159866567474314373}}\,{823543}^{k
},
\dots
\end{eqnarray}
All the coefficients seem to be positive but their combinatorial interpretation remains to be investigated.
It is also interesting to note that the coefficient $a_{n}^{(n)}$ of the dominant power $(n^{n})^{k}$ in $\#\mathcal R_{n}^{(k)}$ decreases very quickly to zero. For instance, $a_{35}^{(35)}<{ 5.267408697\cdot10^{-238}}$.
\end{remark}
The first values of $\#\mathcal R_{n}^{(k)}$ for $n=2$ are  $1,\ 2,\ 6,\ 22,\ 86,\ 342,\ 1366,\ 5462,\ 21846,\ 87382,\ 349526,\dots$. This series of number appears in other contexts. The sequence A047849 of \cite{Sloane} lists some of other interpretations. For instance, these numbers also count the closed walks of length $2n$ at a vertex of the cyclic graph on $6$ nodes, the permutations of length $n$ avoiding $4321$ and $4123$, the closed walks of length $n$ at a vertex of a triangle with two loops at each vertex etc. The sequences for other value of $n$ are not referenced in \cite{Sloane}. It should be interesting to investigate if some of these interpretations naturally extends for $n>3$. By construction, one notices that, for any $k\geq 1$, $n$ divides $\#\mathcal R_{n}^{(k)}$.

 \section{The state complexity of the shuffle operation revisited}\label{sect-sc}
 Let $(\Lambda,P)\in\mathcal U_{m,n}$ with $\Lambda=[\lambda_{0},\dots,\lambda_{m-1}]$ and $P=[\rho_{0},\dots,\rho_{n-1}]$. We define $\mathfrak s(\Lambda,P)=\{(i,j)\mid \lambda_{i}\cap\rho_{j}\neq\emptyset\}$. 
 \begin{proposition}
	 The set of accessible states of  $\mathfrak{Shuf}(M^{n_1}_{F_1},M^{n_2}_{F_2})$ is 
	 $$\left\{s(\Lambda,P)\mid \exists k\leq \mathbf f(m,n), (\Lambda,P)\in \mathcal U^{(k)}_{m,n}\right\}.$$
 \end{proposition}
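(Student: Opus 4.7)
The plan is to prove both inclusions via a key recursive identity linking paths and U-pairs. First I would establish, by induction on path length $k$, the lemma: if $\mathfrak{p}$ is a path from $\{(0,0)\}$ to a state $E$ in $\mathfrak{Shuf}(M^{n_1}_{F_1},M^{n_2}_{F_2})$ and $\mathcal{P}(\mathfrak{p})=(\Lambda,P)$, then $E=\mathfrak{s}(\Lambda,P)$. The base case $k=0$ follows by inspection of $\mathcal{P}(())$. For the inductive step, write $\mathcal{P}(\mathfrak{p}\cdot\mathfrak{t}_{k+1})=(\Lambda',P')$ with $\Lambda'=(\Lambda\cdot f)\cup\Lambda^{\uparrow}$ and $P'=P\cup(P\cdot g)^{\uparrow}$. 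By Proposition \ref{basicfacts}, $\bigcup_q\lambda_q=\bigcup_q\rho_q=\{1,\dots,2^k\}$, so the shift parameter is $r=2^k$ and every element of $\lambda'_i\cap\rho'_j$ lies either in $\{1,\dots,2^k\}$ or in $\{2^k+1,\dots,2^{k+1}\}$ but not both.

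The unshifted part of $\lambda'_i\cap\rho'_j$ is nonempty iff $\bigl(\bigcup_{q:f(q)=i}\lambda_q\bigr)\cap\rho_j\neq\emptyset$, i.e.\ iff there exists $q$ with $f(q)=i$ and $\lambda_q\cap\rho_j\neq\emptyset$, which by the inductive hypothesis means $(q,j)\in E$ for some $q$ with $f(q)=i$. This matches exactly the first family $\{(f(q_1),q_2)\mid (q_1,q_2)\in E\}$ entering $\mathfrak{d}(E)$. Symmetrically, the shifted part of $\lambda'_i\cap\rho'_j$ is nonempty iff there exists $q$ with $g(q)=j$ and $(i,q)\in E$, corresponding to $\{(q_1,g(q_2))\mid (q_1,q_2)\in E\}$. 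Taking the union recovers $E'=\mathfrak{d}(E)=\mathfrak{s}(\Lambda',P')$.

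With the identity in hand, the forward inclusion follows from a BFS argument. Given any accessible state $E$, pick a shortest path from $\{(0,0)\}$ to $E$; all its intermediate states are accessible and pairwise distinct, and by the result of Brzozowski \emph{et al.} recalled in Section \ref{shuf-mons-tableau} every accessible state is valid, so their total number is at most $\mathbf{f}(m,n)$. Hence the path has length $k<\mathbf{f}(m,n)$, and the identity yields $E=\mathfrak{s}(\Lambda,P)$ for $(\Lambda,P)=\mathcal{P}(\mathfrak{p})\in\mathcal{U}^{(k)}_{m,n}$. The converse inclusion is automatic from the definition of $\mathcal{U}^{(k)}_{m,n}$: any such U-pair is $\mathcal{P}(\mathfrak{p})$ for some path $\mathfrak{p}$ to some state $E$, and $\mathfrak{s}(\Lambda,P)=E$ is therefore accessible.

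The real content is the inductive identity; the length bound is standard and the converse is built into the definition of $\mathcal{U}^{(k)}_{m,n}$. The only delicate point is matching the two families that define $\mathfrak{d}(E)$ with the unshifted versus shifted portions of $(\Lambda',P')$, which works precisely because the shift amount $r=2^k$ equals $\max\bigcup_q\lambda_q=\max\bigcup_q\rho_q$ and therefore the two ranges of labels are disjoint.
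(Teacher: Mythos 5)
Your proof is correct and follows essentially the same route as the paper: the same inductive identity $\mathfrak s(\mathcal P(\mathfrak p))=E$ for any path $\mathfrak p$ from $\{(0,0)\}$ to $E$ (your unshifted/shifted case analysis just spells out the union equality the paper states directly), combined with the bound $\mathbf f(m,n)$ on the number of accessible (valid) states to limit the path length, and the definition of $\mathcal U^{(k)}_{m,n}$ as images of paths for the converse inclusion. No gaps.
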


  \begin{proof} Let us  prove by induction that $\mathfrak{Shuf}(M^{n_1}_{F_1},M^{n_2}_{F_2})\subset \left\{s(\Lambda,P)\mid \exists k\leq \mathbf f(m,n), (\Lambda,P)\in \mathcal U^{(k)}_{m,n}\right\}$. We  have $\mathfrak s(\mathcal P(()))=\{(0,0)\}$. Suppose that $\mathfrak p=(\mathfrak t_{1},\dots,\mathfrak t_{k})$ is 
  a path from $\{(0,0)\}$ to $E$ and assume, as an induction hypothesis, that  $\mathfrak s(\mathcal P(\mathfrak p))=E$.  Let $\mathfrak t_{k+1}=(E,(f,g),E')$ be a transition and set $\mathfrak p'=(\mathfrak t_{1},\dots,\mathfrak t_{k+1})$. Then $E'=\{(i\cdot f,j)\mid (i,j)\in E\}\cup \{(i,j\cdot g)\mid (i,j)\in E\}$. If we set $\mathcal P(\mathfrak p)=(\Lambda,P)$ with $\Lambda=[\lambda_{0},\dots,\lambda_{m-1}]$ and $P=[\rho_{0},\dots,\rho_{n-1}]$, then we have
  \begin{equation}
  \mathcal P(\mathfrak p')=((\Lambda\cdot f)\cup\Lambda^{\uparrow},P\cup (P\cdot g)^{\uparrow}).
  \end{equation}
  Hence, if we set $(\Lambda\cdot f)\cup\Lambda^{\uparrow}=[\lambda'_{0},\dots,\lambda'_{m-1}]$ and $P\cup (P\cdot {g})^{\uparrow}=[\rho'_{0},\dots,\rho'_{n-1}]$ we have
	 \begin{eqnarray}
	 \mathfrak s(\mathcal P(\mathfrak p'))&=&\{(i,j)\mid \lambda'_{i}\cap\rho'_{j}\neq \emptyset\}\nonumber\\&=&
  \{(i\cdot f,j)\mid \lambda_{i}\cap\rho_{j}\neq \emptyset\}\cup \{(i,j\cdot g)\mid \lambda_{i}\cap\rho_{j}\neq \emptyset\}\\
  &=&\{(i\cdot f,j)\mid (i,j)\in E\}\cup \{(i,j\cdot g)\mid (i,j)\in E\}=E',\nonumber
  \end{eqnarray}
  as expected.
  
  Conversely, by construction, for any state $E$ there exists $(\Lambda,P)\in\mathcal U_{m,n}$ such that $\mathfrak s(\Lambda,P)=E$. Since the state complexity is bounded by $\mathbf f(m,n)$, any state is reached in at most $\mathbf f(m,n)$ transitions in $\mathfrak{Shuf}(M_{F_{1}}^{m},M_{F_{2}}^{n})$. This means that there exists $k\leq \mathbf f(m,n)$ such that there exists  $(\Lambda,P)\in\mathcal U_{m,n}^{(k)}$ with $\mathfrak s(\Lambda,P)=E$. \cqfd
 \end{proof}
 From all the results of this section, one deduces
 \begin{theorem}
	 \begin{equation}\label{ScSh}
		 sc_{\shuffle}(m,n)=\#\left\{\mathfrak s(\Lambda,P)\mid (\Lambda,P)\in\bigcup_{k=0}^{\mathbf f(m,n)}\mathcal U^{(k)}_{m,n}\right\}.
	 \end{equation}
 \end{theorem}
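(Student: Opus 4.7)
The plan is to combine the preceding proposition on accessible states with the distinguishability results of Brzozowski \emph{et al.} quoted earlier, together with the expression of $\mathrm{sc}_{\shuffle}(m,n)$ as a maximum over monster automata. The key observation is that the set of accessible states in $\mathfrak{Shuf}(M^{m}_{F_1},M^{n}_{F_2})$ does not depend on the choice of final state sets $F_1$ and $F_2$, since accessibility is determined only by the transition structure, and the monsters $M^{m}_{F_1}$, $M^{n}_{F_2}$ carry \emph{all} possible transition functions regardless of $F_1, F_2$.

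First, I would invoke the formula
\[
\mathrm{sc}_{\shuffle}(m,n)=\max\{\#_{Min}(\mathfrak{Shuf}(M^{m}_{F_1},M^{n}_{F_2}))\mid F_1\subset\IntEnt m, F_2\subset\IntEnt n\}
\]
recalled in Section~\ref{shuf-mons-tableau}. Next, by the previous proposition, for any choice of $F_1,F_2$ the set of accessible states of $\mathfrak{Shuf}(M^{m}_{F_1},M^{n}_{F_2})$ is exactly
\[
\mathcal{S}(m,n)=\left\{\mathfrak s(\Lambda,P)\mid \exists k\leq \mathbf f(m,n),\ (\Lambda,P)\in \mathcal U^{(k)}_{m,n}\right\}.
\]
Since the minimal equivalent CDFA is obtained by quotienting the accessible part, we get the universal upper bound $\#_{Min}(\mathfrak{Shuf}(M^{m}_{F_1},M^{n}_{F_2}))\leq \#\mathcal{S}(m,n)$ valid for every $(F_1,F_2)$, hence
\[
\mathrm{sc}_{\shuffle}(m,n)\leq \#\mathcal{S}(m,n).
\]

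For the reverse inequality I would exhibit a specific choice of $(F_1,F_2)$ for which all accessible states are pairwise inequivalent. This is precisely what Brzozowski \emph{et al.} provide: with their distinguished choice of finality, any two valid states can be separated by a suitable word over the letters $a,b,c$ described at the end of Section~\ref{shuf-mons-tableau}, and every accessible state is valid. Because the alphabet of the monsters contains these three letters (as they are particular pairs of transformations in $\IntEnt m^{\IntEnt m}\times \IntEnt n^{\IntEnt n}$), the same separating words work inside $\mathfrak{Shuf}(M^{m}_{F_1},M^{n}_{F_2})$. Therefore for this $(F_1,F_2)$ the accessible part is already minimal, so $\#_{Min}(\mathfrak{Shuf}(M^{m}_{F_1},M^{n}_{F_2}))=\#\mathcal{S}(m,n)$, giving the reverse inequality and concluding the proof.

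I do not anticipate a serious obstacle here: all the combinatorial work was carried out in Section~\ref{sect-comb-path} to describe accessibility through U-pairs, and distinguishability is imported wholesale from \cite{BJLRS16}. The only minor point to be careful about is the bound $k\leq \mathbf f(m,n)$ in the index of the union: this is legitimate because $\mathbf f(m,n)$ is an \emph{a priori} upper bound on the number of accessible states, so any accessible state in a CDFA is reached by a path of length at most $\mathbf f(m,n)-1$, which keeps the union on the right-hand side consistent with the set of truly accessible states without inflating it.
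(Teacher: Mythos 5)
Your proposal is correct and follows essentially the same route the paper intends: combine the monster formula for $\mathrm{sc}_{\shuffle}$, the preceding proposition identifying the accessible states with $\mathfrak s$-images of U-pairs of grade at most $\mathbf f(m,n)$ (which is independent of $F_1,F_2$), and the Brzozowski \emph{et al.} distinguishability of valid states for a suitable choice of final sets to get matching upper and lower bounds. The paper states the theorem as a direct consequence of these results without spelling out the argument, and your write-up is exactly that deduction, including the correct justification of the truncation $k\leq\mathbf f(m,n)$.
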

 Although Formula (\ref{ScSh}) is an exact expression for the state complexity of the shuffle product, this is not a number easy to manipulate because the set $\bigcup_{k\leq\mathbf f(m,n)}\mathcal U^{(k)}_{m,n}$ is very huge. In fact, we would hope to find
 $\#\left\{\mathfrak s(\Lambda,P)\mid (\Lambda,P)\in\bigcup_{k=0}^{\mathbf f(m,n)}\mathcal U^{(k)}_{m,n}\right\}=\mathbf f(m,n)$. Equivalently
 the conjecture of Brzozowski \emph{et al.} reduces to
 \begin{conj}\label{conj1}
	 \begin{equation}
		\mathfrak s\left(\mathcal U_{m,n}\right)=
		\left\{E\subset \IntEnt m\times \IntEnt  n\mid{}
		  E\cap (\{0\}\times \IntEnt n)\neq \emptyset \text{ and }
		    E\cap (\IntEnt m\times\{0\})\neq \emptyset\right\}
	 \end{equation}
 \end{conj}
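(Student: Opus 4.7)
The plan is to prove the two set-theoretic inclusions separately. The inclusion $\mathfrak s(\mathcal U_{m,n}) \subseteq \{E \mid E \text{ valid}\}$ is the tractable direction and follows directly from the basic structural properties of $U$-pairs; the reverse inclusion is the genuine content of the conjecture and is where all the difficulty lies.

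For the easy direction, let $(\Lambda, P) \in \mathcal U^{(k)}_{m,n}$ with $\Lambda = [\lambda_0, \ldots, \lambda_{m-1}]$ and $P = [\rho_0, \ldots, \rho_{n-1}]$. By definition of $k$-Lvalid one has $2^k \in \lambda_0$, and by definition of $k$-Rvalid one has $1 \in \rho_0$. Proposition \ref{basicfacts} ensures that $\{1, \ldots, 2^k\}$ is covered by both vectors, and that the entries within each vector are pairwise disjoint, so $1$ belongs to a unique $\lambda_{i_0}$ and $2^k$ belongs to a unique $\rho_{j_0}$. Then $1 \in \lambda_{i_0} \cap \rho_0$ forces $(i_0, 0) \in \mathfrak s(\Lambda, P)$, witnessing the first-column condition, and $2^k \in \lambda_0 \cap \rho_{j_0}$ forces $(0, j_0) \in \mathfrak s(\Lambda, P)$, witnessing the first-row condition. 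Hence $\mathfrak s(\Lambda, P)$ is valid, which gives the inclusion.

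For the converse, given a valid $E \subseteq \IntEnt m \times \IntEnt n$, the goal is to exhibit a path in $\mathfrak{Shuf}(M^m_{F_1}, M^n_{F_2})$ from $\{(0, 0)\}$ to $E$; its associated $U$-pair then satisfies $\mathfrak s(\Lambda, P) = E$. I would proceed by induction on $\#E$, with base case $E = \{(0, 0)\}$ realized by the initial $U$-pair. For the inductive step, given $E$ with $\#E > 1$, one must find a valid $E'$ (reachable by induction) and a transition $(f, g) \in \IntEnt m^{\IntEnt m} \times \IntEnt n^{\IntEnt n}$ such that $\{(f(i), j) \mid (i,j) \in E'\} \cup \{(i, g(j)) \mid (i,j) \in E'\} = E$. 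A complementary strategy, suggested by the splitting $\mathcal U^{(k)}_{m,n} = \mathcal L^{(k)}_m \times \mathcal R^{(k)}_n$, is to build $\Lambda$ and $P$ independently: choose $\Lambda$ so that its non-empty rows are exactly $\mathcal D_L(E)$ and $P$ so that its non-empty columns are exactly $\mathcal D_R(E)$, then refine the partitions so that $\lambda_i \cap \rho_j \neq \emptyset$ holds precisely for $(i, j) \in E$.

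The main obstacle is the coupling between $\Lambda$ and $P$ imposed by the shifted-copy operation in the definition of $\mathcal P$: at each transition, the second half of the labels in $\Lambda$ is the image of the previous $\Lambda$ under $f$, and similarly for $P$ under $g$, so the joint intersection pattern cannot be tuned one cell at a time. Realizing a prescribed valid $E$ therefore requires synchronizing $\Lambda$ and $P$ across the entire history of transitions, and it is exactly this synchronization that the notion of $U$-pair formalizes. To make progress I would first attempt the construction on restricted families of valid sets $E$ whose incidence pattern is ``tree-like'' (with each row and column carrying at most one intersection outside the first row and first column), where the path can almost be written down explicitly, and then try to glue several such configurations via a single merging transition. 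Failing a direct argument, a reasonable fallback is to exploit the enumeration machinery of Section \ref{sect-enumeration} to verify the equality for further small $(m, n)$ and to search experimentally for a bijection between the $\mathfrak s$-fibres of $\mathcal U_{m,n}$ and the set of valid subsets.
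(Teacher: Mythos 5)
Your argument for the inclusion $\mathfrak s(\mathcal U_{m,n})\subseteq\{E\mid E\cap(\{0\}\times\IntEnt n)\neq\emptyset,\ E\cap(\IntEnt m\times\{0\})\neq\emptyset\}$ is correct: since $2^k$ lies in the first entry of any $k$-Lvalid vector and $1$ in the first entry of any $k$-Rvalid vector, while both vectors partition $\{1,\dots,2^k\}$, the cells $(i_0,0)$ and $(0,j_0)$ you exhibit are indeed in $\mathfrak s(\Lambda,P)$. This direction is also what the paper has (it is the translation of Brzozowski \textit{et al.}'s observation that every accessible state is valid). The genuine gap is the converse inclusion, which you only outline as a plan: neither your induction on $\#E$ nor your ``build $\Lambda$ and $P$ separately'' strategy is carried out, and no argument is given that the inductive step (finding a valid $E'$ and a letter $(f,g)$ mapping $E'$ onto a prescribed $E$) can always be completed. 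You should be aware that this converse is exactly the open content of the statement: in the paper it is a \emph{conjecture} (equivalent to the conjecture of Brzozowski \textit{et al.}), not a theorem, and the paper likewise proves only the easy containment, establishes reachability for special families (the permutation states $E_\sigma$, the full tableau $\IntEnt m\times\IntEnt n$, and, via deletion moves $\bigl(\binom{i_2}{i_1},\binom{j_2}{j_1}\bigr)$, every state containing a full row and a full column), and then reduces the question to the dense states (Conjecture \ref{conj2}). So a complete blind proof could not have been expected, but your text should not present the sketched reduction as if it were on track to a proof.

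One technical point in your discussion of the obstacle is also slightly off. By the splitting $\mathcal U^{(k)}_{m,n}=\mathcal L^{(k)}_m\times\mathcal R^{(k)}_n$, there is \emph{no} synchronization required between $\Lambda$ and $P$: any $k$-Lvalid vector may be paired with any $k$-Rvalid vector of the same grade $k$, coming from entirely different sequences of maps $f$ and $g$. The true difficulty is internal to each factor: the doubling/shift conditions defining Lvalid and Rvalid vectors constrain which partitions of $\{1,\dots,2^k\}$ are available, and hence which intersection patterns $\{(i,j)\mid\lambda_i\cap\rho_j\neq\emptyset\}$ can be realized. Reformulating the problem in exactly these terms, and reducing it to the dense case, is what the paper does; your proposed next steps (special ``tree-like'' families, experimental search via the enumeration of Section \ref{sect-enumeration}) are reasonable but roughly parallel what is already in the paper rather than going beyond it.
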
 
 \begin{example}
	 For $m=n=2$, the $\mathbf f(2,2)=10$ states are recovered from $\mathcal U_{2,2}$ as follows
	 $$
	 \begin{array}{cccccccc}
		 \begin{array}{|c|c|}\hline
		 \times&\color{white}\times \\\hline &\color{white}x \\\hline
	 \end{array}=&\mathfrak s\left(\ \begin{array}{|c|c|}\hline
		 1&\color{white}1 \\\hline\ &\ \\\hline 
	 \end{array}\ \right),\\ \\
	 \begin{array}{|c|c|}\hline
		 \times&\times \\\hline  &\color{white}x\\\hline 
	 \end{array}=&\mathfrak s\left(\ \begin{array}{|c|c|}\hline
		 1&2 \\\hline \color{white}x&\ \\\hline 
	 \end{array}\ \right),&{}
	  \begin{array}{|c|c|}\hline
		 \times& \\\hline  \times&\color{white}x\\\hline 
	 \end{array}=&\mathfrak s\left(\ \begin{array}{|c|c|}\hline
		 2&\color{white}2 \\\hline 1&\ \\\hline  
	 \end{array}\ \right),& 
		 \begin{array}{|c|c|}\hline
		 \color{white}\times&\times \\\hline \times &\color{white}x \\\hline
	 \end{array}=&\mathfrak s\left(\ \begin{array}{|c|c|}\hline
		 &2 \\\hline 1&\  \\\hline
	 \end{array}\ \right),&{}
	 \begin{array}{|c|c|}\hline
		 \times& \\\hline  &\times\\\hline 
	 \end{array}=&\mathfrak s\left(\ \begin{array}{|c|c|}\hline
		 1\,4& \\\hline &2\,3 \\\hline  
	 \end{array}\ \right),\\ \\
	 \begin{array}{|c|c|}\hline
		 \times&\times\\\hline \times&\ \\\hline 
	 \end{array}=&\mathfrak s\left(\ \begin{array}{|c|c|} \hline
		 1&2\,4 \\\hline 3 &  \\\hline
	 \end{array}\ \right),&\ 
		 \begin{array}{|c|c|} \hline
		 \times& \\\hline \times &\times\\\hline 
	 \end{array}=&\mathfrak s\left(\ \begin{array}{|c|c|}\hline
		 1& 4\\\hline &2\,3  \\\hline
	 \end{array}\ \right),&{}
	 \begin{array}{|c|c|}\hline
		 \times&\times \\\hline  &\times\\\hline 
	 \end{array}=&\mathfrak s\left(\ \begin{array}{|c|c|}\hline
		 1\,4& \\\hline 3&2\\  \hline
	 \end{array}\ \right),&
	 \begin{array}{|c|c|}\hline
		 &\times\\\hline \times&\times\ \\\hline 
	 \end{array}=&\mathfrak s\left(\ \begin{array}{|c|c|}\hline
		 &4 \\\hline 1\,3 & 2 \\\hline
	 \end{array}\ \right),\\ \\
		 \begin{array}{|c|c|}
		 \hline\times&\times \\\hline \times &\times \\\hline
	 \end{array}=&\mathfrak s\left(\ \begin{array}{|c|c|}\hline
		 1& 4\\\hline 3&2  \\\hline
	 \end{array}\ \right).
	 \end{array}
	 $$
 \end{example}

Although we have no general algorithm allowing us to compute a reverse function to $\mathfrak s$, we know how to handle a few families of tableaux.
\begin{example}\label{permutations}
	Suppose that $n=m$ and define $E_{\sigma}=\{(i,i\cdot \sigma)\mid i\in\IntEnt n\}$ for any permutation $\sigma$ of $\IntEnt n$.
	First suppose that $0\cdot \sigma=0$ and let $k$ be such that $2^{k-1}\leq n< 2^{k}$. We consider the vector $\Pi_{n}=[\{1,2^{k}\},\{2,2^{k}-1\},\dots,\{2^{k}-n,n+1\},\{2^{k}-n+1\},\{2^{k}-n+2\},\dots,\{n\}]$. We check that $\Pi_{n}\in\mathcal L^{(k)}_{n}\cap\mathcal R^{(k)}_{n}$ and $\mathfrak s([\Pi_{n},\Pi_{n}])=\{(i,i)\mid i\in\IntEnt n\}$. For instance for $n=5$, one has
	\begin{equation}
		\mathfrak s\left(\ 
		\begin{array}{|c|c|c|c|c|}\hline
			1\ 8&&&&\\\hline
			&2\ 7&&&\\\hline
			&&3\ 6&&\\\hline
			&&&4&\\\hline
			&&&&5\ \\\hline
		\end{array}\ 
		\right)=\begin{array}{|c|c|c|c|c|}\hline
			\times&&&&\\\hline
			&\times&&&\\\hline
			&&\times&&\\\hline
			&&&\times&\\\hline
			&&&&\times\\\hline
		\end{array}
	\end{equation}
	To construct any other set $E_{\sigma}$ with $0\cdot\sigma=0$ it suffices to consider 
	the pair $[\Pi_{n}\cdot \sigma,\Pi_{n}]$  because we have $(\Pi_{n}\cdot \sigma)[i]=\Pi_{n}[i\cdot \sigma]$.{}
	For instance we have,
	\begin{equation}
		\mathfrak s\left(\ 
		\begin{array}{|c|c|c|c|c|}\hline
			1\ 8&&&&\\\hline
			&&3\ 6&&\\\hline
			&&&&5\\\hline
			&&&4&\\\hline
			&2\ 7&&&\ \\\hline
		\end{array}\ 
		\right)=\begin{array}{|c|c|c|c|c|}\hline
			\times&&&&\\\hline
			&&\times&&\\\hline
			&&&&\times\\\hline
			&&&\times&\\\hline
			&\times&&&\\\hline
		\end{array}
	\end{equation}	
Now suppose that $0\cdot\sigma\neq 0$. Let $k$ such that $2^{k-1}<n\leq 2^{k}$ 
and $r=2n-2^{k}$. If $n$ is even we set $q=2^{k-1}-\frac n2$ and
$\Pi_{n}=[\{1,2^{k-1}+1\},\dots,\{q,2^{k-1}+q\},\{2^{k-1},2^{k}\},\{2^{k-1}-1,2^{k}-1\},\dots,\{2^{k-1}-q+1,2^{k}-q+1\},\{q+1\},\dots,\{2^{k-1}-q\},\{2^{k-1}+q+1\},\dots,\{2^{k}-q\}]$. This vector partitions the set $\{1,\dots,2^{k}\}$ into $n=2q+r$ parts with $2q$ entries of size $2$ and $2(2^{k-1}-2q)=r$ entries of size $1$. If $n$ is odd we set $q=2^{k-1}-\frac{n-1}2$ and $\Pi_{n}=
[\{1,2^{k-1}+1\},\dots,\{q,2^{k-1}+q\},\{2^{k-1},2^{k}\},\{2^{k-1}-1,2^{k}-1\},\dots,\{2^{k-1}-q+2,2^{k}-q+2\},\{q+1\},\dots,\{2^{k-1}-q+1\},\{2^{k-1}+q+1\},\dots,\{2^{k}-q+1\}]$. This vector partitions the set  $\{1,\dots,2^{k}\}$ into $n=2q-1+r$  parts with $2q-1$ entries of size $2$ and $2(2^{k-1}-q+1-(q+1)+1)=r$ entries of size $1$. In both cases ($n$ even or $n$ odd), a permutation $\Lambda$ (resp. $P$) of $\Pi_{n}$ belongs to $\mathcal L_{n}^{(k)}$ (resp. to $\mathcal R_{n}^{(k)}$) if and only if $\Lambda[0]=\{2^{k-1},2^{k}\}$ (resp. $P[0]=\{1,2^{k-1}+1\}$). To construct a pair $[\Lambda,P]$ such that $\mathfrak s([\Lambda,P])=E_{\sigma}$ we proceed as follows
\begin{enumerate}
	\item We set $\Lambda[0]=P[0\cdot\sigma ]=\{2^{k-1},2^{k}\}$ and $\Lambda[0\cdot \sigma^{-1}]=P[0]=\{1,2^{k-1}+1\}$.
	\item We choose randomly the remaining entries in $\Lambda$ in such a way that $\Lambda$ is a permutation of $\Pi_{n}$.
	\item If $j$ is the index of an entries filled in the previous step, we set $P[j\cdot\sigma]=\Lambda[j]$.
\end{enumerate}
For instance, consider the following tableau for $n=5$
\begin{equation}
E_{(132)(45)}=
\begin{array}{|c|c|c|c|c|}\hline
			&&\times&&\\\hline
			\times&&&&\\\hline
			&\times&&&\\\hline
			&&&&\times\\\hline
			&&&\times&\\\hline
		\end{array}
\end{equation}
We have $\Pi_{5}=[\{1,5\},\{2,6\},\{4,8\},\{3\},\{7\}]$. Two values in each vector are set in the first step
\begin{equation}
([\{4,8\},\{1,5\},?,?,?],[\{1,5\},?,\{4,8\},?,?]).
\end{equation}
We complete randomly the first vectors
\begin{equation}
([\{4,8\},\{1,5\},\{3\},\{2,6\},\{7\}],[\{1,5\},?,\{4,8\},?,?]).
\end{equation}
The remaining values of the right vector are deduced using step 3.
\begin{equation}
([\{4,8\},\{1,5\},\{3\},\{2,6\},\{7\}],[\{1,5\},\{3\},\{4,8\},\{7\},\{2,6\}]).
\end{equation}
Hence, we check that
	\begin{equation}
		\mathfrak s\left(	\begin{array}{|c|c|c|c|c|}\hline
			&&4\ 8&&\\\hline
			1\ 5&&&&\\\hline
			&3&&&\\\hline
			&&&&2\ 6\\\hline
			&&&7&\\\hline
		\end{array}
	\right)=\begin{array}{|c|c|c|c|c|}\hline
			&&\times&&\\\hline
			\times&&&&\\\hline
			&\times&&&\\\hline
			&&&&\times\\\hline
			&&&\times&\\\hline
		\end{array},
	\end{equation}
	as expected.
	
Observe that the inequality $2^{k-1}\leq n<2^{k}$ when $\sigma(0)= 0$ and $2^{k-1}< n\leq 2^{k}$ when $\sigma(0)\neq 0$. As a consequence, if $\sigma\neq\emptyset$, the state $E_{\sigma}$ is accessible from $\{(0,0)\}$ by a path of length $\lceil \log_{2}(n)\rceil$. But when  $\sigma(0)= 0$, we need a path of length $\lceil \log_{2}(n+1)\rceil$. This makes  a difference only when $n$ is a power of $2$. For instance, for $n=2$ we can not access the state $\{(0,0),(1,1)\}$ in less than $2$ steps while we need only one step to access to $\{(0,1),(1,0)\}$.
\end{example}
\begin{example}\label{fullTab}
	Let $k$ such that $2^{k-1}<m\leq 2^{k}$.
	We set 
	\begin{equation}\mathtt PP_{n}=[\{1,\dots,2^{k}\},\{2^{k}+1,\dots,2^{k+1}\},\dots,\{2^{k+n-2}+1,\dots,2^{k+n-1}\}], \end{equation}
		and $\mathtt P\Lambda_{m}=[p\lambda_{0},\dots,p\lambda_{m-1}]\in\mathcal L^{(k+n-1)}_{m}$ with
		\begin{equation}p\lambda_{i}=\{m-i+2^{k}\alpha\mid\alpha\in\IntEnt{2^{n}}\}\cup \{2^{k}(\alpha+1)-i\mid\alpha\in\IntEnt{2^{n}}\},\end{equation}
		for $i\in \IntEnt{2^{k}-m}$, and
		\begin{equation}p\lambda_{j}=\{m-j+2^{k}\alpha\mid\alpha\in\IntEnt{2^{n}}\},\end{equation} for $j\in\{2^{k}-m,\dots,m-1\}$.
	We let the reader check that $\mathtt P\Lambda_{m} \in\mathcal L^{(k+n-1)}_{m}$  and $\mathtt PP_{n}\in\mathcal R^{(k+n-1)}_{m}$. We have
	\begin{equation}
	\mathfrak s\left([\mathtt P\Lambda_{m},\mathtt PP_{n}]\right)=\IntEnt m\times\IntEnt n,
	\end{equation}
	because $m-i+2^{k}(2^{j}-1)\in p\lambda_{i}\cap p\rho_{j}$ for any $(i,j)\in \IntEnt m\times\IntEnt n$.
	For instance,
	\begin{equation}
		[\mathtt P\Lambda_{3},\mathtt PP_{5}]\sim\begin{array}{|c|c|c|c|c|}
		\hline{}
		\underline 3\ 4&\underline7\ 8&11\ 12\ \underline{15}\ 16&19\ 20\ 23\ 24\ 27\ 28\ \underline{31}\ 32&35\ 36\ 39\ 40\ 43\ 44\ 47\ 48\\
		&&&& 51\ 52\ 55\ 56\ 59\ 60\ \underline{63}\ 64\\\hline
		\underline{2} &\underline 6&10\ \underline{14}&18\ 22\ 26\ \underline{30}&34\  38\ 42\ 46\ 50\ 54\ 58\ \underline{62}\\\hline
		\underline 1&\underline 5&9\ \underline{13}& 17\ 21\ 25\ \underline{29}& 33\ 37\ 41\ 45\ 49\ 53\ 57\ \underline{61}\\\hline
		\end{array}.
	\end{equation}
	The underlined numbers correspond to the elements  $m-i+2^{k}(2^{j}-1)=2^{j+2}-(i+1)$, for $i=0\dots 2$ and $j=0\dots 4$.
\end{example}
This last example allows us to prove that many other states are reachable in $\mathfrak{Shuf}(M^{m}_{F_{1}},M^{n}_{F_{2}})$. Let us illustrate this point. Let $E$ be a state such that there exist $i_{1},i_{2}\in\IntEnt m$, $j_{1},j_{2}\in\IntEnt n$ satisfying $\{(i_{1},j)\in E\mid j\in\IntEnt n\}\subset \{(i_{2},j)\in E\mid j\in\IntEnt n\}$ and  $\{(i,j_{1})\in E\mid i\in\IntEnt m\}\subset \{(i,j_{2})\in E\mid i\in\IntEnt m\}$. If we denote by $\left( b\atop a\right)$ the map sending $a$ to $b$ and letting unchanged the other numbers, we have
\begin{eqnarray}
	E\cdot\left(\binom {i_{2}}{i_{1}},\binom{j_{2}}{j_{1}}\right)&=&{}
	\{(i,j)\in E\mid i\neq i_{1}\}
	\cup \{(i_{2},j)\mid (i_{1},j)\in E\}\nonumber{}
	\\&&\cup{}
	\{(i,j)\in E\mid j\neq j_{1}\}
	\cup \{(i,j_{2})\mid (i,j_{1})\in E\}\\
	&=&E\setminus\{(i_{1},i_{2})\}.\nonumber
\end{eqnarray}
Applying successively many times this property from $\IntEnt m\times\IntEnt n$ 
(which is reachable from Example \ref{fullTab}), we find that any state $E$ such that $\IntEnt m\times\{i\} \cup \{j\}\times\IntEnt n\subset E$, for some  $(i,j)\in \IntEnt m\times\IntEnt n$ is also reachable.
\begin{example}
	\[{}
	\begin{array}{|c|c|c|}
		\hline \times&\times&\times\\
		\hline \times&\times&\times\\
		\hline \times&\times&\times\\\hline
	\end{array}\mathop{\longrightarrow}^{\left(\binom 01,\binom01\right)}
\begin{array}{|c|c|c|}
		\hline \times&\times&\times\\
		\hline \color{white}\times&&\\
		\hline \times&\times&\times\\\hline
	\end{array}\cup	\begin{array}{|c|c|c|}
		\hline \times&\color{white}\times &\times\\
		\hline \times&\ &\times\\
		\hline \times&\ &\times\\\hline
	\end{array}=\begin{array}{|c|c|c|}
		\hline \times&\times &\times\\
		\hline \times&&\times\\
		\hline \times&\times&\times\\\hline
	\end{array}
	\mathop{\longrightarrow}^{\left(\binom 01,\binom02\right)}
	\begin{array}{|c|c|c|}
		\hline \times&\times &\times\\
		\hline \color{white}\times&&\\
		\hline \times&\times&\times\\\hline
	\end{array}\cup
	\begin{array}{|c|c|c|}
		\hline \times&\times &\color{white}\times\\
		\hline \times&&\color{white}\times\\
		\hline \times&\times&\color{white}\times\\\hline
	\end{array}
	=\begin{array}{|c|c|c|}
		\hline \times&\times &\times\\
		\hline \times&&\\
		\hline \times&\times&\times\\\hline
	\end{array}
	\mathop{\longrightarrow}^{\left(\binom 02,\binom02\right)}
	\begin{array}{|c|c|c|}
		\hline \times&\times &\times\\
		\hline \times&&\\
		\hline \times&\times&\\\hline
	\end{array}
	\]
\end{example}{}
As a consequence, by applying the inclusion-exclusion principle one obtains
\begin{equation}
	{\mathrm sc}_{\shuffle}(m,n)\geq \sum_{k=1}^{m}\sum_{\ell=1}^{n}(-1)^{k+\ell}\binom mk\binom n\ell2^{(m-k)(n-\ell)}>2^{(m-1)(n-1)}.
\end{equation}
Let us call \emph{dense} the states $E$ such that  $\{(i_{1},j)\in E\mid j\in\IntEnt n\}\subset \{(i_{2},j)\in E\mid j\in\IntEnt n\}$ implies $i_{1}=i_{2}$ and  $\{(i,j_{1})\in E\mid i\in\IntEnt m\}\subset \{(i,j_{2})\in E\mid i\in\IntEnt m\}$ implies $j_{1}=j_{2}$. We denote by $\mathcal{D}_{m,n}$ the set of dense $(m,n)$-states.\\
Conjecture \ref{conj1} reduces to the following one
 \begin{conj}\label{conj2}
	 \begin{equation}
		\mathcal D_{m,n}\subset\mathfrak s\left(\mathcal U_{m,n}\right).
	 \end{equation}
 \end{conj}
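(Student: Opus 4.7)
The plan is to show, for each dense state $E \in \mathcal D_{m,n}$, the existence of a U-pair $(\Lambda,P) \in \mathcal U_{m,n}$ with $\mathfrak s(\Lambda,P) = E$. By the decomposition $\mathcal U^{(k)}_{m,n} = \mathcal L^{(k)}_m \times \mathcal R^{(k)}_n$, it suffices to produce vectors $\Lambda = [\lambda_0,\dots,\lambda_{m-1}]$ and $P = [\rho_0,\dots,\rho_{n-1}]$, respectively Lvalid and Rvalid of a common rank $k$, such that $\lambda_i \cap \rho_j \neq \emptyset$ exactly when $(i,j) \in E$.

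My first step would be to extend the ad hoc constructions of Examples~\ref{permutations} and~\ref{fullTab} into a systematic recipe. In the permutation case, entries are tied in pairs summing to $2^k + 1$; in the full-matrix case, they spread geometrically as $\{\alpha + 2^k \beta\}$. A general dense state interpolates between these extremes: certain pairs of cells must share elements (as for permutations) while others must be kept disjoint (as for the full matrix). A natural attempt is to start from the full-matrix construction and then selectively identify elements between rows and between columns, so that the cells outside $E$ become empty intersections. Concurrently I would try an induction on $\#E$ or on $\min(m,n)$: given dense $E$, remove a row or column whose deletion keeps the state dense, apply the induction hypothesis, and then append a single transition $(f,g)$ to reinsert the missing row or column via the recursion $(\Lambda,P)\mapsto ((\Lambda\cdot f)\cup\Lambda^\uparrow, P\cup(P\cdot g)^\uparrow)$.

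The main obstacle will be to reconcile the $mn$ local intersection constraints with the two global recursive conditions defining $\mathcal L^{(k)}_m$ and $\mathcal R^{(k)}_n$. The operation ${}^\uparrow$ shifts all elements rigidly by $2^{k-1}$, so any tie between two columns at step $k-1$ is automatically duplicated at step $k$, which makes it hard to break such ties selectively. Density of $E$ should afford enough independence between distinct rows (resp.\ columns) to sustain the construction, but removing a row may destroy density, so the induction step must also handle non-dense intermediate states, bringing us outside the class $\mathcal D_{m,n}$. Moreover it is not clear that a single transition suffices to re-insert an arbitrary row while simultaneously preserving Lvalidity of $\Lambda$ and Rvalidity of $P$. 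I expect that overcoming these difficulties will require a careful normal form for U-pairs, parameterized by the rank $k$, so that one can always arrange a U-pair whose internal combinatorial structure is aligned with the prescribed incidence pattern of $E$; this alignment is plausibly where the bijection with the tableaux of Brzozowski \emph{et al.} would enter, and where Conjecture~\ref{conj2} would finally meet Conjecture~\ref{conj1}.
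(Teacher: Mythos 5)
What you have written is a research plan, not a proof, and the gap is total: at no point do you actually produce, for an arbitrary dense state $E\in\mathcal D_{m,n}$, a pair $(\Lambda,P)\in\mathcal L^{(k)}_{m}\times\mathcal R^{(k)}_{n}$ with $\mathfrak s(\Lambda,P)=E$. You correctly identify the two natural strategies (interpolating between the constructions of Examples \ref{permutations} and \ref{fullTab}, or inducting on rows/columns) and you also correctly identify why each of them stalls: the shift operation ${}^{\uparrow}$ rigidly duplicates every tie created at an earlier step, so ties cannot be broken selectively; deleting a row of a dense state need not leave a dense state, so the induction leaves the class $\mathcal D_{m,n}$; and it is not established that a single transition $(f,g)$ can reinsert a row while preserving Lvalidity and Rvalidity. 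Listing these obstacles is not the same as overcoming them, and the "careful normal form for U-pairs" you invoke at the end is precisely the missing ingredient, not a step you supply.

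You should also be aware that the statement you were asked to prove is stated in the paper as a \emph{conjecture}: the paper itself offers no proof, and indeed explains that Conjecture \ref{conj1} (equivalently, the state-complexity conjecture of Brzozowski \emph{et al.}) reduces to it; the authors only verify small cases (e.g.\ $\mathcal D_{3,3}$) and special families such as permutation states and states containing a full row and a full column. So there is no proof in the paper to compare yours against, and a correct submission would have to do what neither you nor the paper does: give a general construction (or an impossibility argument) covering all dense incidence patterns, not just the extremal ones. As it stands, your text would serve as a reasonable statement of the problem and its difficulties, but it establishes nothing beyond what the paper already records.
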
 
 \begin{example} There are two kinds of dense states in $\mathcal D_{3,3}$ : those that have exactly one cross  by line and  column and those that have exactly two crosses by line and  by column. The states of first kind are clearly in $ s\left(\mathcal U_{3,3}\right)$ by using Example \ref{permutations}. There are $6$ states of second kinds that can be obtained the ones from the others by permuting the lines or the columns. They also belong  in  $s\left(\mathcal U_{3,3}\right)$ since we have
 \begin{equation}\begin{array}{ccc}
	  \begin{array}{|c|c|c|}\hline \times&\times&\\\hline&\times&\times\\\hline\times&&\times\\\hline \end{array}=
	  \mathfrak s\left(\ \begin{array}{|c|c|c|}\hline5\ 8&3&\\\hline&7&2\\\hline1\ 4&&6\\\hline \end{array}\ \right)
	  &\ &
	    \begin{array}{|c|c|c|}\hline \times&\times&\\\hline\times&&\times\\\hline&\times&\times\\\hline \end{array}=
	    \mathfrak s\left(\ \begin{array}{|c|c|c|}\hline5\ 8&3&\\\hline1\ 4&&6\\\hline&7&2\\\hline \end{array}\ \right)
	  \\ \\
	   \begin{array}{|c|c|c|}\hline \times&&\times\\\hline\times&\times&\\\hline&\times&\times\\\hline \end{array}=
	   \mathfrak s\left(\ \begin{array}{|c|c|c|}\hline5\ 8&&3\\\hline1\ 4&6&\\\hline&2&7\\\hline \end{array}\ \right)
	  {}
	  &\ &
	    \begin{array}{|c|c|c|}\hline \times&&\times\\\hline&\times&\times\\\hline\times&\times&\\\hline \end{array}=
	    \mathfrak s\left(\ \begin{array}{|c|c|c|}\hline5\ 8&&3\\\hline&2&7\\\hline1\ 4&6&\\\hline \end{array}\ \right)\\
	    \\
	   \begin{array}{|c|c|c|}\hline &\times&\times\\\hline\times&&\times\\\hline\times&\times&\\\hline \end{array}=
	  \mathfrak s\left(\ \begin{array}{|c|c|c|}\hline&6\ 8&3\\\hline1&&7\\\hline5&2\ 4&\\\hline \end{array}\ \right){}
	  &\ &
	    \begin{array}{|c|c|c|}\hline &\times&\times\\\hline\times&\times&\\\hline\times&&\times\\\hline \end{array}=
	    \mathfrak s\left(\ \begin{array}{|c|c|c|}\hline&3&6\ 8\\\hline1&7&\\\hline5&&2\ 4\\\hline \end{array}\ \right).
	    
	  \end{array}
 \end{equation}
 \end{example}
 \section{Conclusion}
 We give a new approach to study the state complexity of the shuffle operation. Unfortunately, we did not succeed in proving  the bound given by Brzozowski \textit{et al.}, but as we have translated the problem into a combinatorial one, independent from its language theoretical definition, we hope we have open up new research opportunities for this problem.    
\bibliography{../COMMONTOOLS/biblio,../COMMONTOOLS/bibjg}

\begin{thebibliography}{10}

\bibitem{Broder1984}
A.~Z. Broder.
\newblock The $r$-{~S}tirling numbers.
\newblock {\em Discrete Mathematics}, 49:241--259, 1984.

\bibitem{BJLRS16}
Janusz~A. Brzozowski, Galina Jir{\'{a}}skov{\'{a}}, Bo~Liu, Aayush Rajasekaran,
  and Marek Szykula.
\newblock On the state complexity of the shuffle of regular languages.
\newblock In Cezar C{\^{a}}mpeanu, Florin Manea, and Jeffrey Shallit, editors,
  {\em Descriptional Complexity of Formal Systems - 18th {IFIP} {WG} 1.2
  International Conference, {DCFS} 2016, Bucharest, Romania, July 5-8, 2016.
  Proceedings}, volume 9777 of {\em Lecture Notes in Computer Science}, pages
  73--86. Springer, 2016.

\bibitem{CSY02}
Cezar C{\^{a}}mpeanu, Kai Salomaa, and Sheng Yu.
\newblock Tight lower bound for the state complexity of shuffle of regular
  languages.
\newblock {\em Journal of Automata, Languages and Combinatorics},
  7(3):303--310, 2002.

\bibitem{CHLP18}
Pascal Caron, Edwin~Hamel{-}De le~court, Jean{-}Gabriel Luque, and Bruno
  Patrou.
\newblock New tools for state complexity.
\newblock {\em CoRR}, abs/1807.00663, 2018.

\bibitem{CGHJK1996}
R.~M. Corless, G.~Gonnet, D.~Hare, D.~Jeffrey, and Donald Knuth.
\newblock On the lambert w function.
\newblock {\em Advances in Computational Mathematic}, 5:329--359, 1996.

\bibitem{Dav18}
Sylvie {Davies}.
\newblock {A General Approach to State Complexity of Operations: Formalization
  and Limitations}.
\newblock {\em ArXiv e-prints}, June 2018.

\bibitem{DO09}
Michael Domaratzki and Alexander Okhotin.
\newblock State complexity of power.
\newblock {\em Theoretical Computer Science}, 410(24):2377 -- 2392, 2009.
\newblock Formal Languages and Applications: A Collection of Papers in Honor of
  Sheng Yu.

\bibitem{Eil74}
S.~Eilenberg.
\newblock {\em Automata, languages and machines}, volume~A.
\newblock Academic Press, New York, 1974.

\bibitem{EM1953}
Samuel Eilenberg and Saunders Mac~Lane.
\newblock On the groups of {~H}$(\pi,n)$. {~I}.
\newblock {\em Annals of Mathematics, Second Series}, 58:55--106, 1953.

\bibitem{Euler1783}
Leonard Euler.
\newblock De serie lambertina plurimisque eius insignibus proprietatibu.
\newblock {\em Acta Acad. Scient. Petropol}, 2:29--51, 1783.

\bibitem{GM08}
Olexandr Ganyushkin and Volodymyr Mazorchuk.
\newblock {\em {Classical finite transformation semigroups: an introduction}}.
\newblock Algebra and Applications. Springer, Dordrecht, 2008.

\bibitem{HU79}
J.~E. Hopcroft and J.~D. Ullman.
\newblock {\em Introduction to Automata Theory, Languages and Computation}.
\newblock Addison-Wesley, Reading, MA, 1979.

\bibitem{Lambert1758}
J.~H Lambert.
\newblock Observationes variae in mathesin puram.
\newblock {\em Acta Helveticae physico-mathematico-anatomico-botanico-medica},
  III:128--168, 1758.

\bibitem{Mas70}
A.~N. Maslov.
\newblock Estimates of the number of states of finite automata.
\newblock {\em Soviet Math. Dokl.}, 11:1373--1375, 1970.

\bibitem{Mezo2010}
Istvan Mezo.
\newblock The $r$-{~B}ell numbers.
\newblock {\em Journal of Integer Sequences}, 13:Article 10.9.8, 14pp, 2010.

\bibitem{WShur2011}
Walter Shur.
\newblock A simple closed form for triangular matrix powers.
\newblock {\em Electronic Journal of Linear Algebra}, 22:1000--1003, 2011.

\bibitem{Sloane}
N.J.A Sloane.
\newblock The on-line encyclopedia of integer sequences, 1998.

\bibitem{YZS94}
Sheng Yu, Qingyu Zhuang, and Kai Salomaa.
\newblock The state complexities of some basic operations on regular languages.
\newblock {\em Theoret. Comput. Sci.}, 125(2):315--328, 1994.

\end{thebibliography}
\end{document}